\setlist[itemize]{noitemsep} % Make itemize lists more compact
\renewcommand\thesection{\Roman{section}} % Roman numerals for the sections
\renewcommand\thesubsection{\roman{subsection}} % roman numerals for subsections
\titleformat{\section}[block]{\large\scshape\centering}{\thesection.}{1em}{} % Change the look of the section titles
\titleformat{\subsection}[block]{\large}{\thesubsection.}{1em}{} % Change the look of the section titles
\newtheorem{Theor} {Theorem}
\newtheorem{rk}{Remark}
\newtheorem{defi}{Definition}
\newtheorem{prop}{Proposition}
\newtheorem{lem}{Lemma}
\newtheorem{conj}{Conjecture}
\title{RIP constants for deterministic compressed sensing matrices-beyond Gershgorin} % Article title
\author{%
\textsc{Arman Arian and \"{O}zg\"{u}r Y\i lmaz} \\[1ex] % Your name
\normalsize Department of Mathematics, \\ University of British Columbia \\ % Your institution
%\normalsize \href{mailto:john@smith.com}{john@smith.com} % Your email address
%\and % Uncomment if 2 authors are required, duplicate these 4 lines if more
%\textsc{Jane Smith}\thanks{Corresponding author} \\[1ex] % Second author's name
%\normalsize University of Utah \\ % Second author's institution
%\normalsize \href{mailto:jane@smith.com}{jane@smith.com} % Second author's email address
}
\date{} % Leave empty to omit a date
\begin{document}

% Print the title
\maketitle

%----------------------------------------------------------------------------------------
%	ARTICLE CONTENTS
%----------------------------------------------------------------------------------------

%\chapter{RIP constants for deterministic compressed sensing matrices-beyond Gershgorin }
\label{breaksquare}

Given a deterministic CS matrix, one of the most common ways to bound its RIP constant is by relating its RIP constant with its \textit{mutual coherence} via  

\begin{equation} \label{maxsparsity1} \delta_k \leq \mu (k-1) \end{equation}

%One of the major drawbacks of designing deterministic CS matrices is that given a matrix, computing the RIP constant of that matrix is NP-hard. Therefore,  upper bounds for RIP constants have been used. We saw that one  of the most useful such bounds relates the RIP constant with the \textit{mutual coherence} of a matrix and is given by \begin{equation} \label{maxsparsity1} \delta_k \leq \mu (k-1) \end{equation}

\noindent Throughout this paper, we will call this bound, the Gershgorin bound on the RIP constants, or simply the Gershgorin bound as this is the bound obtained from Gershgorin circle theorem.  Moreover, the tightest bound that relates the RIP constants of a matrix to its performance as a CS measurement matrix is due to \cite{cai2} and it states that to ensure recovery of $k$-sparse (or compressible) signals, we need $\delta_{2k} < \frac{1}{\sqrt{2}}$. On the other hand, using \eqref{maxsparsity1}, in order for a matrix to have small enough RIP constant, it is sufficient that the maximum sparsity level satisfies $k < \frac{1}{2 \mu \sqrt{2}} +\frac{1}{2}$.  Considering the Welch bound for the coherence of an $m \times n$ matrix,  this imposes a square-root barrier on the sparsity level of signals, namely, $k=\mathcal{O}(\sqrt{m})$. Comparing this level of sparsity with the maximum level of sparsity for sub-Gaussian matrices, which is found  to be $k=\mathcal{O}(\frac{m}{ \log \frac{n}{m}})$, we observe that there is a huge difference between these two. 

 In fact, as mentioned in \cite{foucart2}, p. 141, finding a deterministic CS matrix that satisfies RIP in the optimal regime is a major open problem. Here we quote a few sentences from \cite{foucart2} that explains the intrinsic difficulty of reaching RIP in the optimal regime : 

``The intrinsic difficulty in bounding the restricted isometry constants of explicit matrices $A$ lies in the basic tool for estimating the eigenvalues of $A_{S}^*A_{S}  -Id$, namely, Gershgorin's disk Theorem ... the quadratic bottleneck is unavoidable when using Gershgorin's theorem to estimate restricted isometry constants. It seems that not only the magnitude of the entries of Gramian $A^*A$ but also their signs should be taken into account in order to improve estimates for deterministic matrices, but which tools to be used for this purpose remain unclear." We will verify the fact that one needs to take account the \textit{signs} of entries of the Gramian matrix (in addition to the magnitudes) to obtain a bound that improves the Gershgorin bound. See Section \ref{paleytight}. Moreover, it is shown in \cite{wangplan} that if the number of measurements satisfies $m \gg k^{1+\alpha}s^{-2} \log n$, with $\alpha \in [0,1)$, there is no polynomial time algorithm that can \textit{certify} the RIP constants satisfy $\delta_k \leq s$. This shows the significance of finding RIP matrices, even in the suboptimal regime $m \gg k^{1+\alpha}s^{-2} \log n$, with $\alpha \in [0,1)$. The first step to do so, is going \textit{beyond} the Gershgorin bound, namely, the bound given in \eqref{maxsparsity1}.

In this chapter, we will propose two different tools to replace Gershgorin circle theorem for bounding eigenvalues of the Gramian matrix to estimate the isometry constants for a specific construction. In the first approach, we compare the Gramian matrices of this construction with the skew-adjacency matrices of specific graphs to obtain bounds on the extreme eigenvalues of the Gramian matrices and hence, will estimate the RIP constants. 

To explain the idea used in the second approach, first note that Gershgorin circle theorem bounds \textit{every} eigenvalue of a matrix uniformly. That is, it does not distinguish the extreme eigenvalues with other eigenvalues and it states that every eigenvalue lies in one of Gershgorin circles. However, the isometry constants only depend on the minimum and maximum eigenvalues of the Gramian matrix. There is in fact, a bound called ``Dembo bound" which provides bounds for the maximum and minimum eigenvalues of a positive semidefinite Hermitian matrix. The goal in this chapter is to use one of these two approaches  to achieve an improved bound for the isometry constant, i.e., something better than $\delta_k \leq (k-1) \mu$. We will see that using the first approach mentioned above, one can improve the classical Gerhsgorin bound by a multiplicative constant while using the second approach one can have a small additive improvement. However, the second approach has its own significance because using this approach we will give a pathway by providing an explicit conjecture regarding the distribution of quadratic residues, to break the square-root barrier via $k=\mathcal{O}(m^{5/7})$ (if the conjecture holds). 

All results in the literature on sparse recovery using the standard RIP rely on the Welch bound or its variants using $\ell_1$-coherence. The only exception to this, until our work, is the work of Bourgain et al. \cite{bourgain2}. For a prime number $p$, they constructed an explicit CS matrix of the order $p \times p^{1+\epsilon}$ (where $\epsilon>0$ is a small number and $m=p$ is the number of measurements) such that this matrix satisfies RIP with $\delta_k < 1/ \sqrt{2}$ when $k=\lfloor p^{\frac{1}{2}+\epsilon_0} \rfloor$  (with $\epsilon_0<\epsilon$ is also a small constant) and $p$ is large enough.  As mentioned above, while we can not break the square-root barrier, we will propose novel approaches to improve the bounds based on coherence or $\ell_1$-coherence. Lastly, we will propose a conjecture that if it holds, we would have an \textit{improved} version of breaking the square-root barrier compared to the one given in \cite{bourgain2}. This improvement will be on how close to unity the power $\alpha$ can be chosen in $k=\mathcal{O}(m^{\alpha})$, and on the lower bound on the minimum number of measurements.

\section{Paley tight frames for compressed sensing} \label{paleytight}

In this chapter, we will investigate the behaviour of the RIP constants of a specific class of matrices, and will show that it behaves better than what is expected using the Gerhsgorin circle theorem, i.e., the bound given by \eqref{maxsparsity1}. In order to choose such a class of matrices, first note that for a (normalized) measurement matrix $\Phi$, with the coherence $\mu$, the $2 \times 2$ Gramian matrix of the form $\begin{bmatrix} 1 & c \\ c^* & 1 \end{bmatrix} $, with $|c|=\mu$, has the extreme eigenvalues $1\pm \mu$, and hence, $\delta_2=\mu$ as predicted by \eqref{maxsparsity1}. In the next step, we consider a Gramian matrix of order 3 of the form $\begin{bmatrix} 1 & \mu & \mu \\ \mu & 1 & \mu \\ \mu & \mu & 1 \end{bmatrix}$, and we observe that the extreme eigenvalues of this matrix are of the form $1\pm 2\mu$. However, if we consider a Gramian matrix of the form $\begin{bmatrix} 1 & i \mu & i \mu \\ -i \mu & 1 & \mu \\ -i \mu & -i \mu & 1 \end{bmatrix}$ (with $i=\sqrt{-1}$), the extreme eigenvalues are of the form $1 \pm \sqrt{3} \mu$. Moreover, it can be seen that for a larger value of $k$, the spectral radius of the Gramian matrix of order $k$ can reduce further if all non-diagonal entries are imaginary numbers and a mixture of the above diagonal entries have negative imaginary parts (as opposed to all above diagonal entries having positive imaginary parts, or all having negative imaginary parts).  Therefore, we search among measurement matrices with the property that the inner product of distinct columns are imaginary numbers, and also for large enough $k$, a mixture of above-diagonal entries of Gramian matrices of order $k$, have negative imaginary parts. Such a construction is based on  \textit{Paley tight frame} as proposed in \cite{fickus}. Specifically, we will consider the following matrices.  

\begin{defi} \label{tilde} 
Let $p \equiv 3 $ mod 4 be a prime number, and consider the $p \times p $ DFT matrix whose $(m,n)$th entry is given by $e^{\frac{2 \pi i}{p} mn}$. Next, choose the $(p+1)/2$ rows of the DFT matrix  whose indices  are quadratic residues mod $p$ (starting with the row corresponding to $m=0$). We denote this $(p+1)/2 \times p$ matrix by $H$, which we normalize to obtain the measurement matrix $\Phi$ : $$\Phi :=DH,$$ where $D$ is the diagonal matrix whose first diagonal entry is $\sqrt{\frac{1}{p}}$, and the rest of its diagonal entries are $\sqrt{\frac{2}{p}}$.

\end{defi} 

  Hence, our measurement matrix is a $(p+1)/2 \times p$ matrix with unit norm columns. For example, for $p=7$, we should consider the $7 \times 7$ DFT matrix, and then consider the 1st, 2nd, 3rd, and 5th rows (corresponding to the quadratic residues $m=0,  1, 2, 4$ in $\mathbb{Z}_7$), and subsequently normalize the resultant matrix as mentioned above to obtain the $4 \times 7$ Paley CS matrix. We cam compute  the inner product between the columns corresponding to $n, n' \in \mathbb{Z}_p$  as follows.  $$\langle \phi_n , \phi_{n'} \rangle =\frac{1}{p}+\frac{2}{p} \sum_{m=1}^{\frac{p-1}{2}} e^{\frac{2 \pi i(n-n')}{p} m^2} = \frac{1}{p} \cdot \sum_{m=0}^{p-1} e^{\frac{2 \pi i (n-n')}{p}m^2} = \Big( \frac{n-n'}{p} \Big) \frac{i}{\sqrt{p}}. $$ 

\noindent Here, $\Big( \frac{n-n'}{p} \Big)$ denotes the Legendre symbol (and is 1 if $(n-n')$ is a quadratic residue, and is -1 if $(n-n')$ is a quadratic non-residue).  

One way to bound the RIP constants of this construction is using \eqref{maxsparsity1}, which gives \begin{equation} \label{againg} \delta_k \leq \frac{k-1}{\sqrt{p}} \end{equation} On the contrary, we observe numerically that at least the lower bound on the RIP constant behaves much better. In fact, what we observe in Figure \ref{aboutj} would be consistent with  \begin{equation} \label{kbeta} \delta_k \leq \frac{k^{\beta}}{\sqrt{p}} \end{equation} for $\beta \approx 0.65 $. Note that if \eqref{kbeta} is proved, then the square-root barrier would be broken. In this paper, we will show that for this construction, the bound \eqref{againg} can be improved by an additive or a multiplicative constant using two novel approaches. We will also propose a conjecture regarding distribution of quadratic residues in $\mathbb{Z}_p$ that leads to \eqref{kbeta}.  Next,  we explain how we obtain Figure \ref{aboutj}.  

Fix a value of $p$, say $p=103$, and consider the Paley CS matrix $\Phi$ as defined above. Also, fix a value of $k$, say $k=30$, and choose a signal with random support $T \subseteq \{1,2,...,p\}$ and with $|T|=k$. Let $T=\{r_1,...,r_k\}$, and for each $1 \leq j \leq k$, let $d(j)$ be an estimation for the RIP constant of order $j$ defined by $$d(j)=\max\{ \lambda_{\max} (G_j)-1,1-\lambda_{\min}(G_j) \}$$ where $G_j :=\Phi_{T_j}^* \Phi_{T_j}$, and $T_j :=\{r_1,...,r_j \}$. The graph of $d(j)$ as a function of $j$ is shown in Figure \ref{aboutj} in log-log scale. In this figure, we also plot the classical Gershgorin bound as well as the new improved bound (which will be derived in Section \ref{specificmatrix}) on the RIP constants. As we observe in this figure and as suggested by the method of least squares, the lower bound function $d(j)$ for the RIP constants behaves like $j^{\beta}$ for $\beta \approx 0.65308$. Note that as mentioned above, $d(j)$ is a lower bound for the RIP constants since it is obtained by using only one random support set, while the precise value of RIP constants are obtained by considering the worst case over exponentially many support sets. Accordingly, we perform another experiment in which we compare the behaviour of $d(j)$ as defined above obtained from a single random support set with $d'(j)$ obtained from the worst case of 1000 random support sets. As we observe in Figure \ref{compared1d2}, as we increase the number of random support sets, the behaviour of RIP constant estimation remains almost the same. We will use the estimated value of $\beta$ in $d(j)=j^{\beta}$ later in this chapter. 

\begin{figure}
\begin{center}
    %\hspace{-1.5cm}
    \includegraphics[width=0.4\textwidth ]{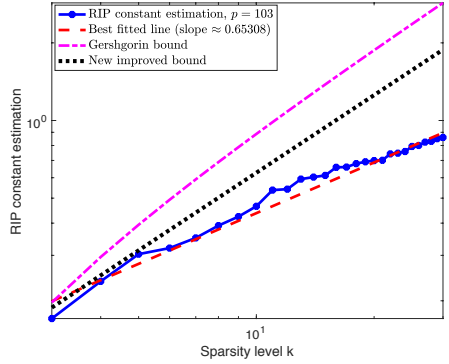}

\caption{ \label{aboutj}   The graph of lower bound of the RIP constants, compared with the Gershgorin bound and the new improved bound, as given in Section \ref{specificmatrix}, on the RIP constants.  }
\end{center}
\end{figure}

\begin{figure}
\begin{center}
    %\hspace{-1.5cm}
    \includegraphics[width=0.4\textwidth ]{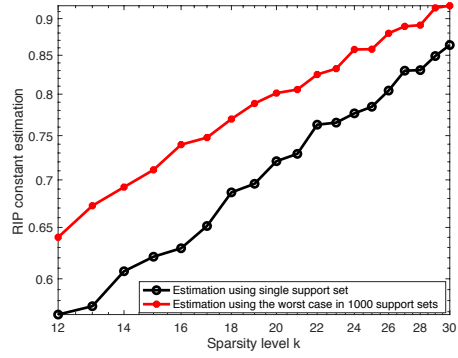}

\caption{ \label{compared1d2}   Comparison of lower bounds of the RIP constants obtained from a single random support set and using the worst case among 1000 random support sets. We observe that the slope of the graph (in log-log scale) obtained from a single support set almost remains constant, as we increase the number of support sets from 1 to 1000. }
\end{center}
\end{figure}

% and we used the fact that $\sum_{m=0}^{p-1} e^{\frac{2 \pi i}{p} m^2}=i \sqrt{p}$ if $ p \equiv  3$ is mod 4. This implies that $\frac{1}{p} \cdot \sum_{m=0}^{p-1} e^{\frac{2 \pi i (n-n')}{p}m^2}=\frac{i}{\sqrt{p}}$ if $n'-n$ is a quadratic residue and $\frac{1}{p} \cdot \sum_{m=0}^{p-1} e^{\frac{2 \pi i (n-n')}{p}m^2}=\frac{-i}{\sqrt{p}}$ if $n'-n$ is a quadratic non-residue (since the sum of \textit{all} $p$th roots of unity is zero). 

\begin{rk} In the construction used in \cite{fickus}, it is assumed that $p \equiv 1$. However, as our goal in this chapter is to improve the Greshgorin bound, we would not be able to do so with the same assumption. The reason is that the computation above shows that the inner product of $n$th and $n'$th columns of $\Phi$ is given by $\Big( \frac{n-n'}{p} \Big) \cdot \frac{1}{\sqrt{p}}$. Now, consider a set $T=\{r_1,...,r_k\}$ such that $\Big( \frac{r_i-r_j}{p} \Big)=1$ whenever $i<j$. Then, the Gramian matrix will be a $k \times k$ matrix of the following form $$G=\begin{bmatrix} 1 & \frac{1}{\sqrt{p}} & \cdots & \frac{1}{\sqrt{p}}  \\ \vdots \\ \frac{1}{\sqrt{p}} & \frac{1}{\sqrt{p}} & \cdots & 1 \end{bmatrix} $$ The Gershgorin bound for the maximum eigenvalue of this matrix is $\eta=1+\frac{k-1}{\sqrt{p}}$. This is in fact the maximum eigenvalue of this matrix since $$\det \begin{bmatrix} -\frac{k-1}{\sqrt{p}} & \frac{1}{\sqrt{p}} & \cdots & \frac{1}{\sqrt{p}} \\ \vdots \\ \frac{1}{\sqrt{p}} & \frac{1}{\sqrt{p}} & \cdots & -\frac{k-1}{\sqrt{p}} \end{bmatrix} =0 $$ This can be verified by adding rows 2, 3, ..., $k$ to the first row, which makes the first row the zero vector. For this reason, we change the assumption to $p \equiv 3$ mod 4, which as we will see later will lead to improving the Gershgorin bounds. 

\end{rk}

 \section{Improving the Gershgorin bound using skew-adjacency matrices }
 \label{specificmatrix}
 
In this section, we propose an approach that will enable us to improve the Gershgorin bound by a multiplicative constant for the construction given in Definition \ref{tilde}.

% Here,  we will use the Weyl inequalities for bounds of the extreme eigenvalues of sum of two Hermitian matrices which we state as follows. 
% 
% \begin{lem} \cite{franklin} \label{maxsum}  Let $M=H+P$, where $M, H,$ and $P$ are Hermitian. Let $M$ have eigenvalues $\mu_1 \geq \cdots \geq \mu_n$ ; Let $H$ have eigenvalues $\eta_1 \geq \cdots \geq \eta_n$; and let $P$ have the eigenvalues $\rho_1 \geq \cdots \geq \rho_n$. Then, $$\mu_i \geq \eta_i +\rho_n \mbox{ \ \ \ and \ \ \ } \mu_i \leq \eta_i +\rho_1$$ for $i=1,2,...,n$. 
% \end{lem} 
 
We start by considering the construction given in Definition \ref{tilde}, and decompose the Gramian matrix of order $k$ for this construction, denoted by $G_k$ as follows.  
 
 \begin{equation} \label{theform} G_k=(g_{ij})=I_k+A_k\end{equation}  where $I_k$ is the identity matrix of order $k$, $A_k=(a_{ij})$, $a_{ii}=0$, and  $a_{ij}=\frac{\sqrt{-1}}{\sqrt{p}}$ or $\frac{-\sqrt{-1}}{\sqrt{p}}$ for $i \ne j$.

 Recall that to compute the RIP constant of order $k$ of the measurement matrix $\tilde{\Phi}$, we need to consider the Gramian matrices $G_k^{\max}$, and $G_k^{\min}$  with largest  maximum and smallest minimum  eigenvalues respectively (among all Gramian matrices of the same order). Decompose these matrices as  $G_k^{\max}=I+A_k^{\max}$, and $G_k^{\min}=I+A_k^{\min}$ as in \eqref{theform}. For any matrix $M$, let $\lambda_{\max}(M)$, $\lambda_{\min}(M)$, and $\rho(M)$ denote the maximum and minimum eigenvalues of $M$, and the spectral radius of $M$ respectively.  Then, we have $$\lambda_{\max}(G_k^{\max}) = 1+ \lambda_{\max}(A_k^{\max})$$ 
 
 \noindent and $$\lambda_{\min}(G_k^{\min}) = 1+\lambda_{\min}(A_k^{\min}) $$
 
\noindent On the other hand, each $A_k$ is a skew-symmetric matrix and hence, if $\lambda$ is an eigenvalue of $A_k$, then $-\lambda$ is also an eigenvalue for $A_k$. This means that $\lambda_{\max}(A_k^{\max})=\rho(A_k^{\max})$ and  $\lambda_{\min}(A_k^{\min})=-\rho(A_k^{\max})$. Therefore, \begin{equation} \label{spectraineq1} \delta_k= \max \{\lambda_{\max}(G_k^{\max})-1,1-\lambda_{\min}(G_k^{\min}) \} = \rho(A_k^{\max})\end{equation}

  \noindent It remains to find a bound for $\rho(A_k^{\max})$. Note that $A_k^{\max}$ can be written in the form of \begin{equation} \label{ac} A_k^{\max}=\frac{i}{\sqrt{p}} C_k, \end{equation} where $i=\sqrt{-1}$, and $C_k^{\max}$ is a skew-symmetric matrix with zero diagonals, and whose every other entry is  1, or -1, and it has the largest spectral radius (among all matrices of the same form).  In order to bound the spectral radius of $C_k^{\max}$, we view $C_k^{\max}$ as the skew adjacency of an oriented graph, and use the results in the literature about the spectral radius of these matrices to find bounds on the extreme eigenvalues of $C_k^{\max}$.  First, we need  the following definition.

%  Now, to bound extreme eigenvalues of the Gramian matrix $G_k$, we bound the extreme eigenvalues of $(a_{ij})$ and $(e_{ij})$ separately. First, we bound the extreme eigenvalues of $(a_{ij})$. Note that $A=(a_{ij}$ can be written as follows. $$(a_{ij})=I+i(c_{ij})$$ where $I_k$ is the identity matrix of order $k$, and $(c_{ij})$ is a real skew-symmetric matrix with zero diagonals, and whose every other entry is 0, 1, or -1. 
 
% The idea here is to consider the matrix $(c_{ij})$ as the skew adjacency of an oriented graph, and use the results in the literature about the spectral radius of these matrices to find bounds on the extreme eigenvalues of $(c_{ij})$ and hence, $(a_{ij})$. We start with the following Definition. 
 
 \begin{defi} Let $G$ be a simple undirected graph of order $n$. By $G^{\sigma}$ we denote a directed (or oriented) graph that assigns a direction to every edge of $G$. The skew-symmetric adjacency matrix of $G^{\sigma}$, denoted by $S(G^{\sigma})=(s_{ij})$ is an $n \times n$ skew symmetric matrix such that $s_{i,j}=1$ and $s_{j,i}=-1$ if $i \to j$ is an arc of $G^{\sigma}$. If there is no arc between the vertices $i$ and $j$, we define $s_{i,j}=s_{j,i}=0$. The skew spectral radius of $G^{\sigma}$, denoted by $\rho_S (G^{\sigma})$ is defined as spectral radius of $S(G^{\sigma})$. 
 \end{defi}

 Now, to find a bound for $\rho(A_k^{\max})$, we need to consider the skew adjacency of a simple graph with largest (among all oriented graphs of order $k$) spectral radius. It turns out \cite{deng} that the oriented graph whose skew adjacency matrix has zero diagonals, whose upper diagonals are all 1, and whose lower diagonals are all -1 has the largest spectral radius. In particular, let $K_n$ be the complete graph of order $n$, and let $K_n^{\tau}$ be the oriented complete graph with the adjacency matrix with zero diagonals, with 1's located in the upper diagonal entries, and -1's located in the lower diagonal entries. In other words, $$S(K_n^{\tau})=\begin{bmatrix} 0 & 1 & 1 & \cdots & 1 \\ -1 & 0 & 1 & \cdots & 1 \\ \vdots \\ -1 & -1 & -1 & \cdots & 0 \end{bmatrix} $$ 
 
 \begin{Theor} \label{completecover} For any oriented graph $G^{\sigma}$ of order $n$, $$\rho_S(G^{\sigma}) \leq \rho_S(K_n^{\tau}) =\cot (\frac{\pi}{2n} ) $$ 
 Equality holds if and only if $S(G^{\sigma})=Q^T S(K_n^{\tau}) Q$  for some signed permutation matrix $Q$. 
 \end{Theor}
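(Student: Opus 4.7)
The plan is to split the proof into three self-contained pieces: a variational setup, a monotonicity argument that reduces everything to the complete graph, and an explicit diagonalization of $S(K_n^\tau)$. First I would record that $S(G^\sigma)$ is a real skew-symmetric matrix, hence normal, with purely imaginary spectrum symmetric about the origin; consequently $iS(G^\sigma)$ is Hermitian, and
$$\rho_S(G^\sigma) \;=\; \max_{x \in \mathbb{C}^n,\, \|x\|=1} \bigl|x^*\bigl(iS(G^\sigma)\bigr)x\bigr|.$$
This variational characterization is the tool that will let me compare spectral radii as the underlying oriented graph is modified.

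Next I would reduce the problem to tournaments. Suppose $\{i,j\}$ is a non-adjacent pair in $G^\sigma$, and let $x$ be a unit eigenvector of $iS(G^\sigma)$ realizing $\rho_S(G^\sigma)$. Adjoining the arc $i \to j$ with sign $\epsilon \in \{\pm 1\}$ perturbs $iS(G^\sigma)$ by $i\epsilon(E_{ij} - E_{ji})$, and a direct calculation shows the Rayleigh quotient is shifted by $-2\epsilon\,\operatorname{Im}(\overline{x_i}\, x_j) \in \mathbb{R}$. Choosing $\epsilon$ to agree with the sign of the current Rayleigh quotient guarantees $|x^*(iS')x|$ does not decrease, hence $\rho_S$ is non-decreasing under edge additions. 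Iterating, it suffices to prove the theorem for every orientation of $K_n$. An analogous "3-cycle reversal" switching step then converts any tournament into $K_n^\tau$ (unique up to relabeling and sign flips of the vertex basis), which both establishes $\rho_S(K_n^\sigma) \leq \rho_S(K_n^\tau)$ and, by tracking when equality holds at each switching move, yields the signed-permutation characterization of the equality case.

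It remains to compute $\rho_S(K_n^\tau)$ in closed form. Writing out $S(K_n^\tau) x = \lambda x$ and subtracting the row-$i$ equation from the row-$(i+1)$ equation gives the two-term recursion $(\lambda + 1) x_{i+1} = (\lambda - 1) x_i$, so $x_i = q^{i-1} x_1$ with $q := (\lambda-1)/(\lambda+1)$. Substituting into the first-row relation $\lambda x_1 = \sum_{j \geq 2} x_j$ and simplifying produces the characteristic equation $q^n = -1$. Since any eigenvalue $\lambda = it$ with $t \in \mathbb{R}$ forces $|q| = 1$, the admissible roots are $q = e^{i(2k+1)\pi/n}$ for $k = 0, 1, \ldots, n-1$; inverting the relation $q = (\lambda-1)/(\lambda+1)$ yields $t = \cot\bigl((2k+1)\pi/(2n)\bigr)$, and the maximum modulus is attained at $k = 0$, so $\rho_S(K_n^\tau) = \cot(\pi/(2n))$.

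The main obstacle is the 3-cycle reversal step: while the edge-addition move is a clean rank-one Hermitian perturbation whose effect on the Rayleigh quotient has a predictable sign, reversing a single arc of an existing 3-cycle toggles two entries of $S$ at once, and whether the Rayleigh quotient increases depends on a coupled condition on cross-terms of the extremal eigenvector. Making this step rigorous, and simultaneously extracting the tight characterization "up to a signed permutation matrix," is the delicate part of the argument; the variational reduction to tournaments is a standard monotonicity manipulation, and the final diagonalization of $S(K_n^\tau)$ is a direct closed-form calculation.
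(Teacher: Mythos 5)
The paper does not prove this theorem at all; it is quoted verbatim from the reference \cite{deng}. So there is no in-paper proof to compare against, and your task was effectively to reconstruct a proof from the literature. With that in mind, here is an assessment of your outline on its own terms.

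Your variational setup and the ``edge-addition'' reduction are sound: since $iS$ is Hermitian, $\rho_S(G^\sigma)=\max_{\|x\|=1}|x^*(iS)x|$, and toggling the $(i,j)$ entry of $S$ from $0$ to $\pm 1$ changes $x^*(iS)x$ by $-2\epsilon\,\mathrm{Im}(\overline{x_i}x_j)$, which can always be made non-negative by picking $\epsilon$; evaluating at the extremal eigenvector of the old matrix shows $\rho_S$ cannot decrease, so the extremum over all oriented graphs is attained at some orientation of $K_n$. Your closed-form diagonalization of $S(K_n^\tau)$ is also correct: subtracting consecutive eigenvector equations gives the recursion $(\lambda+1)x_{i+1}=(\lambda-1)x_i$, the first-row relation forces $q^n=-1$ for $q=(\lambda-1)/(\lambda+1)$, and the Moebius map $\lambda=(1+q)/(1-q)$ turns $q=e^{i(2k+1)\pi/n}$ into $\lambda=i\cot\bigl((2k+1)\pi/(2n)\bigr)$, with maximal modulus $\cot(\pi/(2n))$ at $k=0$.

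The genuine gap, which you flag yourself, is the passage from an arbitrary tournament to $K_n^\tau$. Reversing an arc of a $3$-cycle toggles \emph{two} entries of $S$, so the Hermitian perturbation is no longer rank one and the Rayleigh-quotient increment $2\bigl(\mathrm{Im}(\overline{x_j}x_i)-\mathrm{Im}(\overline{x_i}x_j)\bigr)$-type cross terms do not have a controllable sign in general; your argument gives no reason each reversal is weakly $\rho_S$-increasing. There is also a subtler issue with the equality characterization: your plan conflates ``reachable from $K_n^\tau$ by reversals that preserve $\rho_S$'' with ``signed-permutation equivalent to $S(K_n^\tau)$,'' and these need not coincide a priori (for $n=3$ every tournament happens to attain $\cot(\pi/6)=\sqrt{3}$ and also happens to be signed-permutation equivalent to $K_3^\tau$, but that coincidence needs proof and does not persist in the obvious way for larger $n$). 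As written, the proposal establishes the easy reduction and the eigenvalue computation but not the maximality over tournaments or the rigidity statement, which are the substance of the cited result.
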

 
 \noindent Based on this theorem and using our notation, we can conclude that \begin{equation} \label{spectraineq2} \rho(C_k^{max}) \leq \cot (\frac{ \pi}{2k}) \leq \frac{2k}{\pi} \end{equation} 
 
 \noindent where we used the fact that $\cot (x) \leq 1/x $ for $x>0$. This inequality comes from the standard inequality $x<\tan x$ for $x>0$.  Lastly, by combining   \eqref{spectraineq1}, \eqref{ac}, and \eqref{spectraineq2} we obtain  the following theorem. 
 
 \begin{Theor} \label{improvedskew} Let $ p \geq 3$ be a  prime number. The RIP constant of the matrix $\tilde{\Phi}$ as defined in Definition \ref{tilde} satisfies $$\delta_k \leq \frac{2}{\pi} \cdot \frac{k}{\sqrt{p}} $$ for any $k \leq p$.

  \end{Theor}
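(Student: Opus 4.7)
The plan is to string together the ingredients already assembled in this section. Fix a $k$-subset $T\subset\mathbb{Z}_p$ and let $G_k=\Phi_T^*\Phi_T$. First I would justify the decomposition $G_k=I_k+A_k$ of \eqref{theform} with $A_k$ skew-symmetric (not merely skew-Hermitian): from the inner-product formula in Definition \ref{tilde}, every off-diagonal entry is $\pm i/\sqrt{p}$, and because $p\equiv 3\pmod 4$ we have $\left(\tfrac{-1}{p}\right)=-1$, so $\left(\tfrac{n'-n}{p}\right)=-\left(\tfrac{n-n'}{p}\right)$ and therefore $a_{ji}=-a_{ij}$.

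Next I would write $A_k=\tfrac{i}{\sqrt{p}}C_k$ with $C_k$ real, $\{-1,0,1\}$-valued and satisfying $C_k^T=-C_k$, as in \eqref{ac}. Since $C_k$ is real skew-symmetric, its spectrum lies on the imaginary axis and is symmetric about $0$, which transports under multiplication by $i/\sqrt{p}$ to the real, $\pm$-symmetric spectrum of $A_k$ used in \eqref{spectraineq1}. In particular $\rho(A_k)=\rho(C_k)/\sqrt{p}$, and so
\[
\delta_k \;=\; \rho(A_k^{\max}) \;=\; \rho(C_k^{\max})/\sqrt{p}.
\]
Since all off-diagonals of $C_k^{\max}$ are nonzero, $C_k^{\max}$ is the skew-adjacency matrix of an oriented complete graph on $k$ vertices (a tournament), placing it in the scope of Theorem \ref{completecover}.

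The final step is to apply Theorem \ref{completecover} to obtain $\rho(C_k^{\max})\le \cot(\pi/(2k))$, then use the elementary inequality $\cot x\le 1/x$ on $(0,\pi/2)$ at $x=\pi/(2k)$ to get $\cot(\pi/(2k))\le 2k/\pi$. Dividing by $\sqrt{p}$ yields $\delta_k\le \tfrac{2}{\pi}\cdot\tfrac{k}{\sqrt{p}}$, which is the claim.

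There is essentially no hard analytic step: Theorem \ref{completecover} does all the heavy lifting, and the passage from $A_k$ to $C_k$ is a routine algebraic factorization. The only genuine point of care is the sign-symmetry that makes $A_k$ skew-symmetric; this is precisely where the congruence $p\equiv 3\pmod 4$ is used, and it also accounts for the remark earlier in the section explaining why the $p\equiv 1\pmod 4$ case of Paley frames cannot be treated by this argument. (I would also flag that the theorem statement reads ``$p\ge 3$ prime,'' but the proof requires the stronger $p\equiv 3\pmod 4$ inherited from Definition \ref{tilde}, since otherwise the skew-symmetry of $A_k$ fails.)
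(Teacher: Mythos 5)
Your argument is exactly the paper's: decompose $G_k = I_k + A_k$, factor $A_k^{\max} = \tfrac{i}{\sqrt{p}} C_k^{\max}$ with $C_k^{\max}$ the skew-adjacency matrix of a tournament on $k$ vertices, invoke Theorem \ref{completecover} to get $\rho(C_k^{\max}) \le \cot(\pi/(2k)) \le 2k/\pi$, and divide by $\sqrt{p}$. You supply slightly more detail than the paper does on two points it takes for granted --- that $p\equiv 3\pmod 4$ is what forces $a_{ji}=-a_{ij}$ via $\left(\tfrac{-1}{p}\right)=-1$, and that the $\pm$-symmetry of the spectrum should be read off from the real skew-symmetric factor $C_k$ rather than from the complex matrix $A_k$ itself --- and your closing remark that the hypothesis should read $p\equiv 3\pmod 4$ rather than merely ``$p\ge 3$ prime'' is a correct observation about an imprecision in the statement.
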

 
 Therefore, the maximum sparsity level $k$ for which we have a guarantee for  recovery through BPDN using this construction must now satisfy $$2k < \sqrt{\frac{p}{2}} \cdot \frac{\pi}{2}$$ (instead of the standard bound $2k < \sqrt{\frac{p}{2}}+1$). For example, if we set $p=1009$, the maximum sparsity level becomes $2k \leq  \lfloor \sqrt{\frac{p}{2}} \cdot \frac{\pi}{2} \rfloor =35$ (instead of the standard bound $2k \leq \lfloor \sqrt{\frac{p}{2}} \rfloor+1=23$).

 \section{Improving the Gershgorin bound using Dembo bounds}
 \label{improvewithdembo}
   
 We have observed so far that the Gershgorin bound can be improved with a multiplicative constant using the specific construction given in Definition \ref{tilde}. For the rest of this paper, we propose another tool that can also improve the Gershgorin bound. Although this approach improves this bound only by an additive constant but it has advantage over the previous approach in the sense that it can be applied to other constructions. More importantly, we will propose a conjecture that if it holds, then using this approach the square-root barrier can be broken for the construction given in Definition \ref{tilde}. First, we explain the idea behind this approach. 

We know that the RIP constants of a (normalized) measurement matrix $\Phi$ is computed using the extreme eigenvalues of a matrix of the form $A:=\Phi_T^* \Phi_T$, where $T$ is a set $T$ with $|T|=k$. Now, if the coherence of $\Phi$ is $C/\sqrt{m}$ (which is the suboptimal value considering the Welch bound), then by Gershgorin circle theorem, every eigenvalue (including the extreme eigenvalues) lies in $(1-kC/\sqrt{m},1+kC/\sqrt{m})$. However, we would naturally expect to have sharper bounds if we try to bound only the extreme eigenvalues. Below, we attempt to do this using the so-called Dembo bounds, and then we will generalize it later. These bounds estimate the extreme eigenvalues of a positive semidefinite Hermitian matrix, and is the main tool that we use in this approach.

\begin{Theor}  \label{dembobounds} (\textit{Dembo Bounds, \cite{dembo}}) Suppose that a positive semidefinte Hermitian matrix $R$ can be written as $ R=\begin{bmatrix} c & \textbf{b}^* \\ \textbf{b} & Q \end{bmatrix}$ where $Q$ is a $k \times k$ positive semidefinite Hermitian matrix, $\textbf{b}$ is a $k \times 1$ vector, and $c \geq 0$. Also, suppose that $ \lambda_1 \leq \lambda_2 \leq ... \leq \lambda_{k+1}$ are the eigenvalues of $R$. Then, Dembo bounds can be stated as  \begin{equation} \label{biggerlambda} \lambda_{k+1} \leq \frac{c+\eta_k}{2}+\sqrt{\frac{(c-\eta_k)^2}{4}+b^{*}b} \end{equation} and \begin{equation} \label{smallerlambda} \lambda_1 \geq \frac{c+\eta_1}{2}-\sqrt{\frac{(c-\eta_1)^2}{4}+b^{*}b }  \end{equation} where $\eta_1$ is any lower bound on the minimum eigenvalue of $Q$ and $\eta_k$ is any upper bound on the maximum eigenvalue of $Q$.

\end{Theor}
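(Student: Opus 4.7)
The plan is to diagonalize $Q$, use the Schur complement to derive a secular equation for the eigenvalues of $R$ that lie outside $\mathrm{spec}(Q)$, and then sandwich that equation between two simpler quadratics whose roots recover exactly the stated bounds. Since $Q$ is Hermitian, write $Q=UDU^{*}$ with $D=\mathrm{diag}(\mu_{1},\dots,\mu_{k})$; conjugating $R$ by the block-unitary $\mathrm{diag}(1,U)$ preserves $\mathrm{spec}(R)$ and replaces $\mathbf{b}$ by $\mathbf{b}':=U^{*}\mathbf{b}$, which has the same squared norm. Assume therefore $Q=D$. For any $\lambda\notin\{\mu_{i}\}$, the Schur complement identity yields
\[
\det(\lambda I-R)=\prod_{i=1}^{k}(\lambda-\mu_{i})\cdot f(\lambda),\qquad f(\lambda):=(\lambda-c)-\sum_{i=1}^{k}\frac{|b'_{i}|^{2}}{\lambda-\mu_{i}},
\]
so every eigenvalue of $R$ outside $\mathrm{spec}(Q)$ is a root of $f$. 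Because $f'(\lambda)=1+\sum_{i}|b'_{i}|^{2}/(\lambda-\mu_{i})^{2}>0$, $f$ is strictly increasing on each connected component of its domain.

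For \eqref{biggerlambda}, focus on $\lambda_{k+1}$. Cauchy interlacing gives $\lambda_{k+1}\geq\mu_{\max}(Q)$. If $\lambda_{k+1}\leq\eta_{k}$, the bound is trivial since the right-hand side of \eqref{biggerlambda} is at least $\tfrac{c+\eta_{k}}{2}+\tfrac{|c-\eta_{k}|}{2}=\max(c,\eta_{k})\geq\eta_{k}$. Otherwise $\lambda_{k+1}>\eta_{k}\geq\mu_{\max}(Q)$; on $(\eta_{k},\infty)$ every $\lambda-\mu_{i}\geq\lambda-\eta_{k}>0$, so
\[
f(\lambda)\;\geq\;g(\lambda)\;:=\;(\lambda-c)-\frac{\mathbf{b}^{*}\mathbf{b}}{\lambda-\eta_{k}}.
\]
The equation $g(\lambda)=0$ is equivalent to the quadratic $(\lambda-c)(\lambda-\eta_{k})=\mathbf{b}^{*}\mathbf{b}$, whose larger root is $\lambda^{*}=\tfrac{c+\eta_{k}}{2}+\sqrt{\tfrac{(c-\eta_{k})^{2}}{4}+\mathbf{b}^{*}\mathbf{b}}$. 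Since $f(\lambda^{*})\geq g(\lambda^{*})=0=f(\lambda_{k+1})$ and $f$ is increasing on $(\eta_{k},\infty)$, we conclude $\lambda_{k+1}\leq\lambda^{*}$, which is \eqref{biggerlambda}. The lower bound \eqref{smallerlambda} follows by the mirror argument: Cauchy interlacing gives $\lambda_{1}\leq\mu_{\min}(Q)$, so assuming $\lambda_{1}<\eta_{1}$ (else the bound is trivial), on $(-\infty,\eta_{1})$ every $\lambda-\mu_{i}\leq\lambda-\eta_{1}<0$, hence $f(\lambda)\leq(\lambda-c)-\mathbf{b}^{*}\mathbf{b}/(\lambda-\eta_{1})=:\tilde g(\lambda)$, and the smaller root of $\tilde g$ is exactly the right-hand side of \eqref{smallerlambda}.

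The main subtlety is the edge case where $\lambda_{k+1}$ or $\lambda_{1}$ coincides with some $\mu_{i}$, since then $f$ is undefined there. This happens only when $b'_{i}=0$, in which case the $i$-th coordinate of the reduced system decouples and $\mu_{i}$ is inherited as an eigenvalue of $R$; the surviving block has one fewer dimension and satisfies the same hypotheses, so the argument applies inductively. Verifying that the sandwich quadratic has real roots is immediate from the nonnegativity of its discriminant $(c-\eta_{k})^{2}+4\mathbf{b}^{*}\mathbf{b}$, and positive semidefiniteness of $R$ is not actually needed for the algebraic portion — it only ensures the bounds are meaningful when applied to eigenvalues of Gramian matrices.
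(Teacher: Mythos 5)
The paper itself does not prove this theorem; it cites \cite{dembo} and uses the bounds as a black box, so there is no in-paper proof to compare against. Your reconstruction is correct and self-contained. Reducing to diagonal $Q$ via a block-unitary conjugation, extracting the secular function $f(\lambda)=(\lambda-c)-\sum_i |b'_i|^2/(\lambda-\mu_i)$ from the Schur complement, and then replacing the $\mu_i$ uniformly by the single bound $\eta_k$ (respectively $\eta_1$) to obtain a one-pole comparison function $g$ whose quadratic zero is exactly the claimed bound is the right mechanism, and the monotonicity of $f$ on $(\eta_k,\infty)$ (resp.\ $(-\infty,\eta_1)$) closes the argument cleanly. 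The case split on whether $\lambda_{k+1}\le\eta_k$ (trivial, since the right-hand side is at least $\max(c,\eta_k)$) versus $\lambda_{k+1}>\eta_k$ also quietly disposes of the pole-collision worry you raise near the end: in the nontrivial case $\lambda_{k+1}$ lies strictly above every $\mu_i$, so it is automatically off $\mathrm{spec}(Q)$ and the inductive decoupling paragraph, while correct, is not actually needed. Your closing observation that positive semidefiniteness of $R$ plays no role in the algebra is also accurate; only Hermitian symmetry and the validity of $\eta_1,\eta_k$ as spectral bounds for $Q$ are used.
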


\noindent Before stating and deriving our results formally, we  perform a  numerical experiment.  Consider a Paley CS matrix as defined in Definition \ref{tilde}, with $p=103$. We also fix a $k$-value, say $k=30$, and we choose a random set $T =\{r_1,r_2,...,r_k\}\subseteq \{1,2,...,p\}$. For $1\leq j \leq k$, let $T_j= \{r_1,...,r_j\}$,  $A_j=\Phi_{T_j}$, and $D_j=\Phi_{T_j}^* \Phi_{T_j}$. For $2 \leq j \leq k$, we can write $D_j$ of the form $D_j=\begin{bmatrix} 1 & \textbf{b}^* \\ \textbf{b} & D_{j-1} \end{bmatrix}$.  Let $\lambda_{j-1}$ be the maximum eigenvalue of $D_{j-1}$, and  $\lambda_j$ be the maximum eigenvalue of $D_j$. Since each entry of $\textbf{b}$ is $\pm i/\sqrt{p}$, we have  $\textbf{b}^* \textbf{b} =\frac{j-1}{p}$ . Therefore, using  Dembo bounds, the bound on $\lambda_j$ which we denote by $\eta_j$ can be written as follows.  $$\eta_{j}=\frac{1+\lambda_{j-1}}{2}+\sqrt{\frac{(1-\lambda_{j-1})^2}{4}+\frac{j-1}{p}} $$ We also find the upper bound for the maximum eigenvalue of $D_j$ given by Gershgorin bound, i.e., $$\eta'_j=1+\frac{j-1}{\sqrt{p}}$$ Next,  we calculate the ratio of the upper bounds for the maximum eigenvalues of $D_j$ to the actual maximum eigenvalues of $D_j$ for these two different bounds, i.e.,$\frac{\mbox{Dembo bounds}}{\mbox{actual eigenvalues}}$ and also $\frac{\mbox{Gershgorin bounds}}{\mbox{actual eigenvalues}}$,  and we plot the graphs in log-log scale.  Note that if the graph is closer to $y=1$ it means we have a better and tighter estimate. The graphs are shown in Figure \ref{dembocompare} and we can clearly see that Dembo bound gives a better estimate for the maximum eigenvalues compared to Gershgorin bounds. A similar reasoning can be given regarding the estimates for the minimum eigenvalues. However,  this approach can not be applied in practice since it  assumes that we have access to exact eigenvalues of previous order in each step. Despite this fact,  using Dembo bounds or generalizations of this  bound can lead to an improvement of Gershgorin bounds. We begin by finding a non-trivial bound for $\delta_k$ for a fixed small value of $k$, i.e., a bound tighter than the one given by Gershgorin bound. Then, we apply Dembo bounds or the so-called ``Generalized Dembo bounds" to obtain non-trivial bounds on $\delta_k$ for the next values of $k$ inductively.

%We compute the extreme eigenvalues of $G_2 :=\Phi_{T_2}^* \Phi_{T_2}$, with $T_2=\{r_1,r_2\}$. Next, we estimate the extreme eigenvalues of $G_3 :=\Phi_{T_3}^* \Phi_{T_3}$ , with $T_3 =\{r_1,r_2,r_3\}$ using the extreme eigenvalues of $G_2$ combined with \eqref{biggerlambda} and \eqref{smallerlambda}. In the next step, we compute the \textit{

% measurement matrix, say, a $p \times p^2$ chirp sensing matrix $\Phi$ with $p=257$. For different values of $k$ ($k=3$ to $k=20$), we choose a random set $T_1 \subseteq \{1,2,\cdots p^2\}$ with $k$ elements, 
 
\begin{figure}
\begin{center}
    %\hspace{-1.5cm}
    \includegraphics[width=0.4\textwidth ]{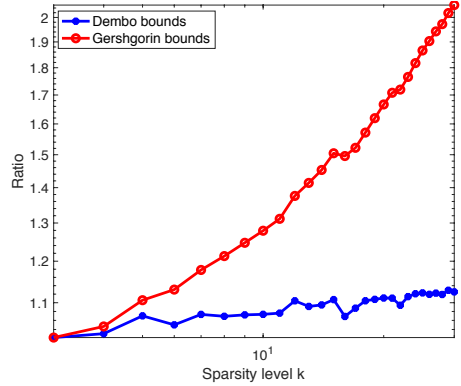}

\caption{ \label{dembocompare}    Comparing the sharpness of Dembo bounds and Gershgorin bounds for estimating the maximum eigenvalue of a semi-positive definite Hermitian matrix by considering the ratio of these bounds over the actual maximum eigenvalues for a fixed Paley matrix with $p=103$.   }
\end{center}
\end{figure}

Theorem below provides a bound for the RIP constants when the construction given in Definition \ref{tilde} are used as the measurement matrices.

\begin{Theor}
\label{mainth}

Let $p \geq 7$ be a prime such that $p \equiv 3 \mod 4$.    Then for $k \geq 3$, the RIP constant $\delta_k$ of the  matrix $\tilde{\Phi}$,  as defined above,  satisfies \begin{equation} \label{improvedbound1} \delta_{k} \leq \frac{k-1-\frac{1}{c(k-1)}}{\sqrt{p}} \end{equation}  where $c=\frac{1}{2(2-\sqrt{3})} $.  \end{Theor}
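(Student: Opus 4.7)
The plan is to prove the bound by induction on $k$, using an exact spectral computation at $k=3$ as the base case and Theorem \ref{dembobounds} to propagate from level $k-1$ to level $k$. The structural input is that any $k$-sub-Gramian $D_k=\tilde{\Phi}_T^*\tilde{\Phi}_T$ can be written as
\[
D_k=\begin{bmatrix} 1 & \mathbf{b}^{*} \\ \mathbf{b} & D_{k-1} \end{bmatrix},
\]
where $D_{k-1}$ is itself a $(k-1)$-sub-Gramian of $\tilde{\Phi}$ and $\mathbf{b}^{*}\mathbf{b}=(k-1)/p$, since each entry of $\mathbf{b}$ has modulus $1/\sqrt{p}$ by the Legendre-symbol computation in Section \ref{paleytight}. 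Using the inductive hypothesis to confine the spectrum of $D_{k-1}$ to $[1-\delta_{k-1},\,1+\delta_{k-1}]$ and feeding these as the $\eta_1,\eta_k$ inputs of Theorem \ref{dembobounds} produces the one-step recursion
\[
\delta_k \;\le\; \frac{\delta_{k-1}}{2} + \sqrt{\frac{\delta_{k-1}^{2}}{4}+\frac{k-1}{p}}.
\]

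For the base case $k=3$, I would use the decomposition $G_3=I+(i/\sqrt{p})C_3$ from Section \ref{specificmatrix}, where $C_3$ is a real skew-symmetric $3\times 3$ matrix with $\pm 1$ off-diagonal entries. The characteristic polynomial of any such $C_3$ is $\lambda^{3}+(a^{2}+b^{2}+c^{2})\lambda=\lambda(\lambda^{2}+3)$, with $a,b,c\in\{\pm 1\}$ the three above-diagonal entries, so its eigenvalues are $0,\pm i\sqrt{3}$ independently of the sign pattern. This yields $\delta_3\le\sqrt{3}/\sqrt{p}$. Since $\tfrac{1}{2c}=2-\sqrt{3}$, the theorem's bound at $k=3$ is exactly $\sqrt{3}/\sqrt{p}$, so the base case is matched with equality.

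For the inductive step, assume the bound holds at level $k-1$. Since the right-hand side of the Dembo recursion is monotone increasing in $\delta_{k-1}$, it suffices, after multiplying by $\sqrt{p}$ and setting $x=(k-2)-\tfrac{1}{c(k-2)}$ and $y=(k-1)-\tfrac{1}{c(k-1)}$, to verify the scalar inequality
\[
\frac{x}{2}+\sqrt{\frac{x^{2}}{4}+(k-1)}\;\le\;y.
\]
Both sides are non-negative for $k\ge 3$ and a short estimate shows $y\ge x/2$ in the range of interest, so squaring reduces the problem to $y(y-x)\ge k-1$. A direct computation yields $y-x=1+\tfrac{1}{c(k-1)(k-2)}$; expanding $y(y-x)$ and cancelling the leading $k-1$ term shows the required inequality is equivalent to
\[
\frac{c(k-1)-1}{c^{2}(k-1)^{2}(k-2)}\;\ge\;0,
\]
which holds whenever $c(k-1)\ge 1$. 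Since $c=(2+\sqrt{3})/2>1$, this is satisfied for every $k\ge 2$, so the induction propagates cleanly from $k=3$ to all larger $k$.

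The main obstacle is calibrational rather than conceptual: the constant $c=\tfrac{1}{2(2-\sqrt{3})}$ is pinned by the requirement that the base case at $k=3$ be matched exactly, and the algebra of the inductive step must still close using precisely this value. The computation above reduces the step to the single condition $c(k-1)\ge 1$, which our choice of $c$ satisfies with ample room, so no further slack is needed. A secondary technical point is that Theorem \ref{dembobounds} applies because every sub-Gramian is positive semidefinite, and the Dembo bound is applied uniformly over all $k$-subsets $T$, so the resulting inequality is an upper bound on $\delta_k$ itself rather than on the eigenvalue of a single fixed Gramian.
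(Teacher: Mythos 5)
Your approach mirrors the paper's: a $k=3$ base case obtained from the exact spectrum of the order-$3$ Gramian (the content of Lemma \ref{orderthree}), followed by a step-by-one induction propagated via the Dembo bound of Theorem \ref{dembobounds}, with the inductive step reducing to exactly the scalar inequality of Lemma \ref{aboutk}. Your verification of that scalar inequality is cleaner than the paper's forward chain of manipulations --- isolating the radical, squaring, and reducing everything to $c(k-1)\ge 1$ makes transparent why $c=\tfrac{1}{2(2-\sqrt{3})}$ works --- but the overall proof strategy is the same.
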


%\hspace{-0.65cm} Therefore, the $k$ sparse vectors can be used using our CS matrices for $k \leq \frac{m^{\frac{1}{2-2\nu}}}{\sqrt{2}}+1 $ . Hence the maximum level of sparsity is at least : $$ \frac{m^{\frac{1}{2-2\nu}}}{\sqrt{2}} \geq \frac{m^{1/2+\nu/2}}{\sqrt{2}}=\frac{\sqrt{m}}{\sqrt{2}}m^{\frac{\nu}{2}} \geq  \frac{\sqrt{m}}{\sqrt{2}} m^{\frac{\ln \frac{2m^2-1}{2m^2-2}}{8 \ln m}} =(\frac{2m^2-1}{2m^2-2})^{1/8}  $$ and so the square root barrier for deterministic matrices will be improved by a constant. \\

%\section{Idea of The Proof}

%\begin{rk} For example, if we set $\alpha=0.1$, then $c=2.977$, and  the upper bound on the RIP constant would become $$\delta_k \leq \frac{k-1-\frac{1}{2.977k}}{\sqrt{p}}$$ 
%if the number of measurements $p$ satisfies $p \geq 842485$. 
%
%\end{rk}
%
%\begin{rk}
%If we consider smaller values for $\alpha$ (closer to zero), then we obtain a tighter bound on the RIP constant $\delta_k$, but the bound on the minimum number of measurements $p$ will be larger. On the other hand, if we consider larger values for $\alpha$ (closer to $2-\sqrt{3}$), then we obtain a smaller bound on the minimum number of measurements, but the bound on the RIP constant will be loosened. 
%\end{rk}

\begin{rk}
\label{l1coherence}

The bound given in \eqref{improvedbound1} is not achievable either using Gershgorin bound, or using $\ell_1$-coherence as introduced in \cite{foucart}. The $\ell_1$-coherence function $\mu_1$ of the $m \times n$  matrix $\Phi=[\phi_1  \ \phi_2 \ \cdots \  \phi_n]$ is defined for $s \in [n-1]$ by $$\mu_1 (s):=\max_{i \in [n]} \ \max \{ \sum_{j \in S} | \langle \phi_i, \phi_j \rangle |, S \subseteq [n] , \mbox{card} (S) =s , i \not\in S \}$$ where $[n]=\{1,2,...,n\}$.  It is shown in \cite{foucart} that $\mu \leq \mu_1(s) \leq s \mu $. It is also shown that the RIP constant of a matrix $\Phi$ satisfies $$\delta_k \leq \mu_1 (k-1) \leq (k-1) \mu $$

\noindent So in general $\ell_1$-coherence may give a better bound compared to $\delta_k \leq (k-1) \mu$. However, for the class of matrices considered in Definition \ref{tilde} this is not the case. In fact, if $ k \leq p-1$, then since the inner product of any two distinct columns of $\Phi$ has magnitude $1/\sqrt{p}$, we have $\mu_1(k)=k\mu=\frac{k}{\sqrt{p}}$.

\end{rk}

\noindent To prove Theorem \ref{mainth}, first we will  prove the following lemmas.

%7=4
%11=5
%19=6
%23=7

\begin{lem}

\label{orderthree}

Suppose $p \geq 7$ is a prime number which is 3 mod 4.  The RIP constants $\delta_3$ of the matrix $\tilde{\Phi}$ as defined in Definition \ref{tilde}  satisfies $$\delta_3 = \sqrt{\frac{3}{p}}$$

\end{lem}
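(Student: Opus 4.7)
The plan is to show that \emph{every} $3\times 3$ principal submatrix of the Gramian $\tilde\Phi^*\tilde\Phi$ has exactly the same spectrum, and then read off $\delta_3$ directly. Fix any $T=\{r_1,r_2,r_3\}\subseteq\mathbb{Z}_p$ and write $G_3=\tilde\Phi_T^*\tilde\Phi_T=I_3+A_3$ as in \eqref{theform}. From the inner product computation in Section \ref{paleytight}, the off-diagonal entries of $A_3$ are of the form $\left(\frac{r_k-r_j}{p}\right)\frac{i}{\sqrt{p}}$. Since $p\equiv 3\pmod 4$ we have $\left(\frac{-1}{p}\right)=-1$, so $\left(\frac{r_j-r_k}{p}\right)=-\left(\frac{r_k-r_j}{p}\right)$, and thus $A_3=\frac{i}{\sqrt p}\,C_3$ where $C_3$ is a real skew-symmetric $3\times 3$ matrix with off-diagonal entries in $\{\pm 1\}$.

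The key step is to compute the characteristic polynomial of $C_3$. Writing
\[
C_3=\begin{bmatrix} 0 & a & b\\ -a & 0 & c\\ -b & -c & 0\end{bmatrix},\qquad a,b,c\in\{\pm 1\},
\]
a direct expansion gives $\det(\lambda I-C_3)=\lambda^3+(a^2+b^2+c^2)\lambda=\lambda(\lambda^2+3)$, so the eigenvalues of $C_3$ are $0$ and $\pm i\sqrt 3$, \emph{independently} of the signs $a,b,c$. Consequently the eigenvalues of $A_3=\frac{i}{\sqrt p}C_3$ are $0$ and $\pm\sqrt{3/p}$, and the eigenvalues of $G_3$ are $1$ and $1\pm\sqrt{3/p}$.

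Since this holds for every choice of support $T$ of size $3$, the worst-case deviation of an eigenvalue of $G_3$ from $1$ is exactly $\sqrt{3/p}$, so $\delta_3=\sqrt{3/p}$. There is no real obstacle here: the only nontrivial input is the observation that $-1$ is a non-residue mod $p$ (forcing $A_3$ to be $i/\sqrt p$ times a real skew-symmetric matrix), together with the sign-independence of the characteristic polynomial of a $3\times 3$ skew-symmetric $\pm 1$ matrix.
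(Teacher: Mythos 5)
Your proof is correct and is essentially the paper's own argument: both compute the characteristic polynomial of an arbitrary $3\times3$ Gramian and observe that the cubic cross term vanishes for all sign choices (you phrase it as $abc-abc=0$ in the real skew-symmetric factor $C_3$, while the paper phrases it as $\mathrm{Re}(bdc^*)=0$), yielding eigenvalues $1, 1\pm\sqrt{3/p}$ for every support of size three. The decomposition $A_3=\tfrac{i}{\sqrt p}C_3$ you use is just the paper's own \eqref{ac} specialized to $k=3$, so this is the same route, merely reorganized.
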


\begin{rk} 

The value of RIP constant above is consistent with the one given by  Gerhsgorin bound, i.e., $\delta_3 \leq \frac{2}{\sqrt{p}}$, and  also with the  ``improved" bound as it was given in Theorem \ref{improvedskew}, i.e., $\delta_3 \leq \frac{\frac{6}{\pi}}{\sqrt{p}}$ (note that $\sqrt{3}<6/\pi<2$). 
\end{rk} 

\begin{proof}[Proof of Lemma \ref{orderthree}] Let $G_3$ be an arbitrary Gramian matrix of order 3. We can write $G_3$ in the following form

  $$G_3=\begin{bmatrix} 1 & b & c \\ b^* & 1 & d \\ c^* & d^* & 1 \end{bmatrix} $$ where each $b,c,$ and $d$ is $\pm i/\sqrt{p}$. Let  $p(x)=\det(G_3-xI)$ be the characteristic polynomial of the matrix $G_3$. Then, we have   \begin{equation} \label{baseeqn} \begin{aligned} p ( x ) & = (1-x) \Big( (1-x)^2-dd^*\Big)-b\Big(b^*(1-x) -dc^*\Big) +c \Big( b^* d^* -c^*(1-x) \Big) \\ &=(1-x)^3-(dd^*+bb^*+cc^*)(1-x) +2 Re (bdc^*) =(1-x)^3-\frac{3(1-x)}{p} \end{aligned} \end{equation}

\noindent where we used the fact that each of $b,c$, and $d$ is $\pm \frac{i}{\sqrt{p}}$, and hence, $Re(bdc^*)=0$.  It can be easily verified that the roots of the polynomial above are $x=1-\sqrt{\frac{3}{p}}, x=1, x=1+\sqrt{\frac{3}{p}}$. Thus, \textit{all} Gramian matrices of order 3 (including the ones with the largest maximum eigenvalue, and the smallest minimum eigenvalue) have the same eigenvalues and hence, $$\delta_3=\max\{(1+\sqrt{\frac{3}{p}})-1,1-(1-\sqrt{\frac{3}{p}})\}=\sqrt{\frac{3}{p}}$$

 \end{proof}

\begin{lem} \label{aboutk} If $c \geq 1$ is a constant, then $$\frac{k-1/ck}{2}+\sqrt{\frac{(k-1/ck)^2}{4}+k+1} \leq k+1-1/c(k+1)$$ for each $k \in \mathbb{N}=\{1,2,...\}$. 

\end{lem}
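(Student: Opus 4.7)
The plan is to isolate the square root, square, and reduce the inequality to a clean algebraic statement that can be verified by direct expansion. Write $a := k - \frac{1}{ck}$ and $b := k+1 - \frac{1}{c(k+1)}$, so that the claimed inequality reads
$$\frac{a}{2} + \sqrt{\frac{a^{2}}{4} + (k+1)} \;\leq\; b.$$
Rearranging to $\sqrt{\frac{a^{2}}{4} + (k+1)} \leq b - \frac{a}{2}$, I would first check that $b - \frac{a}{2} \geq 0$ (so squaring is legitimate), and then observe that squaring causes the $a^{2}/4$ terms to cancel, leaving the equivalent condition
$$k+1 \;\leq\; b\,(b-a).$$
Thus the whole lemma reduces to checking these two facts: $b \geq a/2$, and $b(b-a) \geq k+1$.

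Next, I would compute $b - a$ explicitly. A short calculation gives
$$b - a \;=\; 1 + \frac{1}{ck} - \frac{1}{c(k+1)} \;=\; 1 + \frac{1}{c\,k(k+1)},$$
which is positive whenever $c \geq 1$ and $k \geq 1$; in particular $b > a$, hence a fortiori $b > a/2$ (since $a \geq 0$ for $k \geq 1$, $c \geq 1$).

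With $b-a$ in hand, I would expand $b(b-a)$:
$$b(b-a) \;=\; \Bigl(k+1 - \tfrac{1}{c(k+1)}\Bigr)\Bigl(1 + \tfrac{1}{ck(k+1)}\Bigr) \;=\; (k+1) + \frac{1}{ck(k+1)} - \frac{1}{c^{2}k(k+1)^{2}},$$
after the cross-term $\frac{k+1}{ck(k+1)} - \frac{1}{c(k+1)}$ simplifies to $\frac{1}{ck(k+1)}$. Subtracting $k+1$ leaves
$$b(b-a) - (k+1) \;=\; \frac{1}{ck(k+1)}\Bigl(1 - \frac{1}{c(k+1)}\Bigr),$$
which is non-negative precisely because $c(k+1) \geq 1 \cdot 2 = 2 > 1$.

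The only mild obstacle is the bookkeeping in the expansion of $b(b-a)$ and making sure the hypothesis $c \geq 1$ is used at the correct spot (it is only needed to guarantee $c(k+1) \geq 1$, so the result would still hold under the weaker hypothesis $c(k+1) \geq 1$). Once the reduction to $b(b-a) \geq k+1$ is made, the rest is routine algebra, so I would present the proof in the order: reduction by squaring, computation of $b-a$, computation of $b(b-a)$, final sign check.
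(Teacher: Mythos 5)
Your proof is correct and is essentially the paper's argument run in reverse: the paper starts from the positivity of $\frac{4(c(k+1)-1)}{c^2k(k+1)^2}$ and unfolds it algebraically into the squared form $\bigl(b-\tfrac{a}{2}\bigr)^2 \geq \tfrac{a^2}{4}+k+1$, whereas you isolate the radical, square, and reduce to $b(b-a)\geq k+1$, which is the same inequality seen from the other end. Your presentation is the cleaner of the two since it makes transparent exactly where $c\geq 1$ (indeed only $c(k+1)\geq 1$) is used, and it states the necessary nonnegativity check before squaring, which the paper only addresses after the fact.
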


\begin{proof} Note that for each $k \in \mathbb{N}$ we have $$\frac{4k+4kc-4k+4c-4}{c^2k(k+1)^2} \geq 0$$ Hence, we have $$\frac{4k-4kc(k+1)+4c(k+1)^2-4(k+1)}{c^2k(k+1)^2 } \geq 0$$ which implies $$\frac{4}{c^2(k+1)^2}-\frac{4}{c(k+1)}+\frac{4}{ck}-\frac{4}{c^2k(k+1)} \geq 0$$ Therefore, $$\frac{4}{c^2(k+1)^2}-\frac{4}{c}(1-\frac{1}{k+1})+\frac{2}{c}-\frac{8}{c(k+1)}+\frac{4}{ck}-\frac{4}{c^2k(k+1)} \geq -\frac{2}{c}$$ Hence, $$\begin{aligned} k^2 & +4+\frac{4}{c^2(k+1)^2}+\frac{1}{c^2k^2} +4k-\frac{4k}{c(k+1)}+\frac{2}{c}-\frac{8}{c(k+1)}+\frac{4}{ck}-\frac{4}{c^2k(k+1)} \\ &  \geq k^2+\frac{1}{c^2k^2}-\frac{2}{c}+4k+4 \end{aligned} $$ Thus, $$(k+2-\frac{2}{c(k+1)}+\frac{1}{ck})^2 \geq (k-1/ck)^2+4k+4$$ Hence, $$\Big( \frac{k+2-\frac{2}{c(k+1)}+\frac{1}{ck}}{2} \Big)^2 \geq \frac{(k-1/ck)^2}{4}+k+1 $$ On the other hand, it is obvious that the expression whose square we compute on the left hand side is positive, i.e., $k+2-\frac{2}{c(k+1)}+\frac{1}{ck} >0 $, therefore, $$k+1-1/c(k+1)-\frac{k-1/ck}{2} \geq \sqrt{ \frac{(k-1/ck)^2}{4}+k+1}$$ which implies the Lemma. \end{proof}

\noindent Next, we give the proof for Theorem \ref{mainth}. 

\begin{proof}[Proof of Theorem \ref{mainth}] 

To prove this theorem, we use  induction on $k$. Let $p \equiv 3 \mod 4$ be a prime satisfying the condition of the theorem, and let $\tilde{\Phi}$ be the $(p+1)/2 \times p $ measurement matrix as defined in Definition \ref{tilde}. Let $\Gamma \subseteq \{ 1,2,..., p  \}$, and by $\tilde{\Phi}_{\Gamma}$ we mean an  $m \times |\Gamma| $ matrix defined by restriction of $\tilde{\Phi}$ to the columns indexed by the elements of the set $\Gamma$. Define $\lambda_{k} ^{min} $ and $\lambda_{k} ^{max}$ as 
 
\begin{equation} \begin{aligned} \lambda_{k}^{max} &=\max_{\Gamma: | \Gamma | \leq k}\limits{\lambda_{max} (\Phi_\Gamma^* \Phi_\Gamma)}=\lambda_{\max}(\Phi_{\Gamma_0}^* \Phi_{\Gamma_0}), \\ \lambda_{k}^{min} &=\min_{\Gamma: | \Gamma | \leq k}\limits{\lambda_{min} (\Phi_\Gamma^* \Phi_\Gamma)}= \lambda_{\min}(\Phi_{\Gamma_1}^* \Phi_{\Gamma_1}) , \end{aligned} \end{equation} 
where $\Gamma_0$ and $\Gamma_1$ are sets with $k$ elements,  $\lambda_{max}(A)$ and $\lambda_{min}(A)$ denote the maximum and minimum eigenvalues of a matrix $A$ respectively, and $G_k^{\max} :=\Phi_{\Gamma_0}^* \Phi_{\Gamma_0}$ and $G_k^{\min} :=\Phi_{\Gamma_1}^* \Phi_{\Gamma_1}$ denote the Gramian matrices with largest maximum eigenvalue and smallest minimum eigenvalue respectively. Note that the RIP constant $\delta_k$  of $\tilde{\Phi}$ is given by 
\begin{equation}
\label{delta}
\delta_k=\max \{1-\lambda_{k}^{min},  \lambda_{k}^{max} -1\}
\end{equation}

% \noindent Similar to what we did in the proof of Lemma \ref{orderthree}, decompose $G_k^{\max}$ and $G_k^{\min}$ as follows $$G_k^{\max} = B^{\max}+E^{\max}$$ and $$G_k^{\min} =B^{\min}+E^{\min}$$
 
 In order to prove the theorem, we find an upper bound for the maximum eigenvalue of $G_k^{\max}$ and a lower bound for the minimum eigenvalue of $G_k^{\min}$ respectively.   
 
 \noindent \textbf{Proving $\lambda_{\max}(G_k^{\max}) \leq 1+\frac{k-1-\frac{1}{c(k-1)}}{\sqrt{p}}$ for $k \geq 3$}: First, note that the result holds for  $k=3$ by Lemma \ref{orderthree}. Indeed by this lemma, $\lambda_{\max}(G_3^{\max}) \leq 1+\frac{\sqrt{3}}{\sqrt{p}} =1+ \frac{2- \frac{1}{2c}}{\sqrt{p}}$ for $c$ satisfying $\frac{1}{2c} =2-\sqrt{3}$, i.e., $c=\frac{1}{4-2\sqrt{3}}$.      \\
 
 \noindent Now assume that the statement is valid for $k$, then $\lambda_{k+1}^{max} \leq 1+ \frac{k-1/ck}{\sqrt{p}}$ . We will show that $ \lambda_{k+2}^{max} \leq 1+ \frac{k+1-1/c(k+1)}{\sqrt{p}}$. \\

\noindent To bound $\lambda_{k+2}^{\max}$ in terms of $\lambda_{k+1}^{\max}$, we will use the Dembo bounds as stated in Theorem \ref{dembobounds}. In particular, if $R$ is a positive semidefinite Hermitian matrix such that  $R=\begin{bmatrix} c & \textbf{b}^* \\ \textbf{b} & Q \end{bmatrix}$, $Q$ is a $(k+1) \times (k+1)$, positive semidefinite Hermitian matrix, and $\lambda_1 \leq \lambda_2 \hdots \leq \lambda_{k+2}$ are eigenvalues of $R$, then  \begin{equation} \label{dembo1} \lambda_{k+2} \leq \frac{c+\eta_{k+1}}{2}+\sqrt{\frac{(c-\eta_{k+1})^2}{4}+b^{*}b} \end{equation} and \begin{equation} \label{dembo2}  \lambda_1 \geq \frac{c+\eta_1}{2}-\sqrt{\frac{(c-\eta_1)^2}{4}+b^{*}b }  \end{equation} 

\noindent for any $\eta_1$ and $\eta_{k+1}$ such that $\lambda_{max}(Q) \leq \eta_{k+1}$ and $\lambda_{min}(Q) \geq \eta_1$. In our case, $Q$ is the matrix $\tilde{\Phi}_{\Gamma'}^* \tilde{\Phi}_{\Gamma'}$, and $R$ is the matrix $\tilde{\Phi}_{\Gamma}^* \tilde{\Phi}_{\Gamma}$, where $\Gamma, \Gamma' \subseteq \{1,2,...,p \}$ with $|\Gamma'|=k+1$ and $\Gamma \supseteq \Gamma'$ with $|\Gamma|=k+2$. Say, $\Gamma'=\{j_2,j_3,...,j_{k+2} \}$, and $\Gamma=\{j_1,j_2,...,j_{k+2} \}$.  Then $\tilde{\Phi}_{\Gamma}=[\phi_{j_1}, \phi_{j_2}, ..., \phi_{j_{k+2}}]$, and $$\tilde{\Phi}_{\Gamma}^* \tilde{\Phi}_{\Gamma}=\begin{bmatrix} \langle \phi_{j_1}, \phi_{j_1} \rangle & \langle \phi_{j_1}, \phi_{j_2} \rangle & \hdots & \langle \phi_{j_1}, \phi_{j_{k+2}} \rangle \\   \langle \phi_{j_2}, \phi_{j_1} \rangle & \langle \phi_{j_2}, \phi_{j_2} \rangle  & \hdots & \langle \phi_{j_2}, \phi_{j_{k+2}} \rangle \\ \vdots \\   \langle \phi_{j_{k+2}}, \phi_{j_1} \rangle & \langle \phi_{j_{k+2}}, \phi_{j_2} \rangle  & \hdots  & \langle \phi_{j_{k+2}}, \phi_{j_{k+2}} \rangle  \end{bmatrix}$$

\noindent  so we have $c=1$ (as $\|\phi_j\|=1$ for all $j$), $\textbf{b}=[ \langle \phi_{j_2}, \phi_{j_1} \rangle \ \langle \phi_{j_3}, \phi_{j_1} \rangle \ \hdots \ \langle \phi_{j_{k+2}}, \phi_{j_{1}} \rangle]^T$, and $$Q=\begin{bmatrix} \langle \phi_{j_2}, \phi_{j_2} \rangle & \langle \phi_{j_2}, \phi_{j_3} \rangle & \hdots & \langle \phi_{j_2}, \phi_{j_{k+2}} \rangle \\   \langle \phi_{j_3}, \phi_{j_2} \rangle & \langle \phi_{j_3}, \phi_{j_3} \rangle  & \hdots & \langle \phi_{j_3}, \phi_{j_{k+2}} \rangle \\ \vdots \\   \langle \phi_{j_{k+2}}, \phi_{j_2} \rangle & \langle \phi_{j_{k+2}}, \phi_{j_3} \rangle  & \hdots  & \langle \phi_{j_{k+2}}, \phi_{j_{k+2}} \rangle  \end{bmatrix}$$

\noindent Also, by induction hypothesis we have $\eta_{k+1} \leq 1+\frac{k-1/ck}{\sqrt{p}}$. Hence \eqref{dembo1} implies that,  

%We can now use Dembo bounds by setting $c=1$ and $\eta_k=\lambda_{max}=1+\frac{k^{\alpha}}{\sqrt{p}}$ to find the upper bound for $\lambda_{max}'$ :

$$\lambda_{k+2} \leq 1 + \frac{k-1/ck}{2\sqrt{p}}+\sqrt{\frac{(k-1/ck)^{2 }}{4 p}+\textbf{b}^*\textbf{b}}$$ On the other hand, using the fact that each entry of $\textbf{b}$ is $\pm \frac{i}{\sqrt{p}}$ we have $\textbf{b}^*\textbf{b} = \sum_{i=1}^{k+1} \frac{1}{p}=\frac{k+1}{p}$. Therefore, to prove \begin{equation} \label{toprovelambda} \lambda_{k+2} \leq 1+ \frac{k+1-\frac{1}{c(k+1)}}{\sqrt{p}}, \end{equation} it is enough to prove : 

 \begin{equation} \label{enoughtoprove} \frac{k-1/ck}{2\sqrt{p}}+\sqrt{\frac{(k-1/ck)^2}{4 p}+\frac{k+1}{p}} \leq \frac{k+1-1/c(k+1)}{\sqrt{p}} \end{equation}

 \noindent This inequality holds by Lemma \ref{aboutk}. Next, we observe that to calculate $\lambda_{k+2}^{\max}$, we have to consider all such matrices $R$ as mentioned above and take maximum over all such choices. In other words, $$\lambda_{k+2}^{\max} =\max_{\Gamma} \lambda_{k+2}(\tilde{\Phi}_{\Gamma}^* \tilde{\Phi}_{\Gamma})$$ 
 However, as it was seen in \eqref{toprovelambda}, the value of $\lambda_{k+2}(\tilde{\Phi}_{\Gamma}^* \tilde{\Phi}_{\Gamma}) $ only depends on $|\Gamma|=k+2$, and not the elements of $\Gamma$. Therefore, the same upper bound holds for $\lambda_{k+2}^{\max}$, i.e., $$\lambda_{k+2}^{\max} \leq 1+ \frac{k+1-\frac{1}{c(k+1)}}{\sqrt{p}}$$

 \noindent \textbf{Proving  $\lambda_{k}^{\min} \geq 1-\frac{k-1-\frac{1}{c(k-1)}}{\sqrt{p}}$ for $k \geq 3$}: Similar to the argument given above, the induction base holds by Lemma \ref{orderthree}.  Assuming that the statement is valid for $(k+1)$ (induction hypothesis), we prove it for $(k+2)$. Using the same notation used above, and using Dembo bound \eqref{dembo2}, we can find a lower bound for the minimum eigenvalue of $R$, $\lambda_1(R)$, as follows.  $$\lambda_{1} \geq   1- \frac{k-1/ck}{2\sqrt{p}}-\sqrt{\frac{(k-1/ck)^{2 }}{4 p}+\textbf{b}^*\textbf{b}} =  1- \frac{k-1/ck}{2\sqrt{m}}-\sqrt{\frac{(k-1/ck)^2}{4 p}+\frac{k+1}{p}} $$

\noindent where we used the fact that $\textbf{b}^*\textbf{b} = \frac{k+1}{p}$. Hence, using  \eqref{enoughtoprove}, we obtain \begin{equation} \label{toprove2} \lambda_{1} \geq 1- \frac{k+1-1/c(k+1)}{\sqrt{p}} \end{equation}

\noindent Again, note that $\lambda_{k+2}^{\min}$ can be calculated by considering all such matrices $R$ and taking a minimum over them, i.e.,

$$\lambda_{k+2}^{\min} =\min_{\Gamma} \lambda_{1}(\tilde{\Phi}_{\Gamma}^*\tilde{\Phi}_{\Gamma})$$ 
 \noindent and as seen in \eqref{toprove2}, the value of $ \min_{\Gamma} \lambda_{1}(\tilde{\Phi}_{\Gamma}^*\tilde{\Phi}_{\Gamma})$ depends only on   $|\Gamma|=k+2$, and not the elements of $\Gamma$. Therefore, the same lower bound holds for $\lambda_{k+2}^{\min}$, i.e., $$\lambda_{k+2}^{\min} \geq 1- \frac{k+1-\frac{1}{c(k+1)}}{\sqrt{p}}$$

\noindent Therefore, $$\delta_{k+2} =\max\{\lambda_{k+2}^{max}-1,1-\lambda_{k+2}^{min}\} \leq \frac{k+1-1/c(k+1)}{\sqrt{p}}$$ \noindent as desired. 

\end{proof}

\section{A generalized Dembo approach}
\label{generalizeddembo}

Throughout this paper, we have focused so far on the bounds on the maximum and minimum eigenvalues of a Hermitian positive semidefinite matrix $A$ given by Dembo bounds as stated in Theorem \ref{dembobounds}. These bounds are obtained by considering the maximum and minimum eigenvalues of $2 \times 2$ block matrices $R_1$ and $R_2$ satisfying $R_1 \geq A$, and $A \leq R_2$ respectively. In this section, our goal is to tighten these bounds by following a similar idea. In particular,  we would like to consider the maximum and minimum eigenvalues of $3 \times 3$ block matrices $Q_1$ and $Q_2$ satisfying $Q_1 \geq A$, and $A \leq Q_2$, in order to obtain bounds on the extreme eigenvalues of $A$. \\

%Dembo bound considers an upper bound for the $(k-1) \times (k-1)$ submatrix of $A$ as $\eta_{k-1} I$ where $I$ is the identity matrix of order $k-1$ and $\eta_{k-1}$ is an upper bound for the maximum eigenvalue of the $(k-1) \times (k-1)$ submatrix of $A$. Then Dembo bound considers the maximum eigenvalue of a 2 x 2 block matrix as an upper bound for the maximum eigenvalues of $A$. Now, in order to have more tight bounds we can use a similar idea but consider the maximum eigenvalue of a 3 x 3 block matrix in the following sense : 

\begin{lem} \label{newboundlem} Suppose that a $(k+2) \times (k+2)$ positive semidefinte Hermitian matrix $R$ can be written as $ R=\begin{bmatrix} a & b & \textbf{c} \\ b^*  & a &  \textbf{d} \\ \textbf{c}^* & \textbf{d}^* & Q  \end{bmatrix}$ where $Q$ is a $k \times k$ positive semidefinite Hermitian matrix, and $\textbf{c}$ and $\textbf{d}$ are $k \times 1$ vectors, and $a,b \in \mathbb{C}$.  Also, suppose that $ \lambda_1 \leq \lambda_2 \leq ... \leq \lambda_{k+1} \leq \lambda_{k+2}$ are the eigenvalues of $R$. Then $\lambda_{k+2} \leq \nu_{max}$ and $\lambda_1  \geq \nu_{min}$ where $\nu_{max}$ and $\nu_{min}$ are the maximum and minimum roots of  characteristic polynomials of $ R_1=\begin{bmatrix} a & b & \textbf{c} \\ b^*  & a &  \textbf{d} \\ \textbf{c}^* & \textbf{d}^* & \eta_k I  \end{bmatrix}$ and $ R_2=\begin{bmatrix} a & b & \textbf{c} \\ b^*  & a &  \textbf{d} \\ \textbf{c}^* & \textbf{d}^* & \eta_1 I  \end{bmatrix}$ respectively. Here, as before,  $\eta_1$ is any lower bound on the minimum eigenvalue of $Q$ and $\eta_k$ is any upper bound on the maximum eigenvalue of $Q$. 
\end{lem}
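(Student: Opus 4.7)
The plan is to deduce Lemma \ref{newboundlem} directly from the monotonicity of eigenvalues in the Loewner partial order on Hermitian matrices. First I would note that $R_1$ and $R_2$ are themselves Hermitian: the $2 \times 2$ top-left block $\begin{bmatrix} a & b \\ b^* & a \end{bmatrix}$ and the off-diagonal blocks containing $\mathbf{c},\mathbf{d}$ are unchanged from the Hermitian $R$, while $\eta_k I$ and $\eta_1 I$ are Hermitian (with $\eta_1,\eta_k \in \mathbb{R}$ as bounds on the real spectrum of $Q$). Consequently the characteristic polynomials of $R_1$ and $R_2$ have only real roots, and therefore $\nu_{\max} = \lambda_{\max}(R_1)$ while $\nu_{\min} = \lambda_{\min}(R_2)$.

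The key observation is that the block differences
\[
R_1 - R \;=\; \begin{bmatrix} 0 & 0 & 0 \\ 0 & 0 & 0 \\ 0 & 0 & \eta_k I - Q \end{bmatrix}, \qquad R - R_2 \;=\; \begin{bmatrix} 0 & 0 & 0 \\ 0 & 0 & 0 \\ 0 & 0 & Q - \eta_1 I \end{bmatrix}
\]
are both positive semidefinite. Indeed, $\eta_k$ is an upper bound on $\lambda_{\max}(Q)$, so $\eta_k I - Q \succeq 0$, and $\eta_1$ is a lower bound on $\lambda_{\min}(Q)$, so $Q - \eta_1 I \succeq 0$. Hence $R \preceq R_1$ and $R \succeq R_2$ in the Loewner order. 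Applying Weyl's monotonicity principle (if $A \preceq B$ then $\lambda_j(A) \leq \lambda_j(B)$ for every index $j$) to the extremal eigenvalues yields $\lambda_{k+2}(R) \leq \lambda_{\max}(R_1) = \nu_{\max}$ and $\lambda_1(R) \geq \lambda_{\min}(R_2) = \nu_{\min}$, which is precisely the claim.

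I do not expect a real obstacle in the proof itself; its content is essentially a packaging of Loewner monotonicity in a form suitable for the inductive arguments used earlier in the paper. The genuine work, which I expect to follow the lemma, is to extract $\nu_{\max}$ and $\nu_{\min}$ in a workable closed form. Since the bottom-right $k \times k$ block of $R_1$ is the scalar matrix $\eta_k I$, a Schur-complement calculation should force the characteristic polynomial of $R_1$ to factor as $(\eta_k - \lambda)^{k-2}$ times a cubic in $\lambda$ whose coefficients depend only on $a$, $b$, $\mathbf{c}^* \mathbf{c}$, $\mathbf{d}^* \mathbf{d}$, and $\mathbf{c}^* \mathbf{d}$; $\nu_{\max}$ is then the largest root of that cubic, and an analogous reduction handles $R_2$. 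This reduction is where the $3 \times 3$-block bound will acquire concrete bite over the scalar Dembo bound used in Section \ref{improvewithdembo}, but it is not needed for the lemma statement itself.
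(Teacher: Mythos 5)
Your argument is exactly the paper's: observe that $R_1 - R$ and $R - R_2$ are positive semidefinite because $\eta_k I - Q \succeq 0$ and $Q - \eta_1 I \succeq 0$, then invoke eigenvalue monotonicity in the Loewner order to conclude $\lambda_{\max}(R) \leq \lambda_{\max}(R_1)$ and $\lambda_{\min}(R) \geq \lambda_{\min}(R_2)$. The extra remarks about Hermiticity and the Schur-complement factorization of the characteristic polynomial are correct and helpful context, but the core proof is the same two-line Loewner-monotonicity argument the paper gives.
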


\begin{proof} Since $Q - \eta_1 I \geq 0$, we have $R-R_2 \geq 0$, and so $\lambda_{min}(R) \geq \lambda_{min} (R_2)$. Similarly, $\eta_k I -Q \geq 0$, implies that $\lambda_{max}(R) \leq \lambda_{max} (R_1)$.

\end{proof}

\noindent Next, to estimate the extreme eigenvalues of matrices of the form $R_1$ or $R_2$ mentioned above, we need to obtain a formula for determinant of these matrices.  To that end, we use the so called \textit{Schur determinant formula}.

\begin{lem} \cite[p. 50]{schurlem}  (Schur determinant formula) Let $M$ be a $2 \times 2$ block matrix of the form $$M=\begin{bmatrix} P & Q \\ R & S \end{bmatrix}$$ where $P$ is a $p \times p$ matrix, $S$ is an $s \times s$ matrix, $Q$ is a $p \times s$ matrix, and $R$ is an $s \times p$ matrix. If $P$ is invertible, then $$\det (M)= \det (P) \cdot \det(S-RP^{-1}Q) $$ Similarly, if $S$ is invertible, then $$\det(M)=\det (S) \det(P-Q S^{-1} R) $$

\end{lem}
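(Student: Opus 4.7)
The plan is to prove the formula by constructing an explicit block LDU factorization of $M$ and then taking determinants. This is the standard route, and it works cleanly because the two triangular factors will have identity blocks on the diagonal and hence determinant $1$.

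First I would handle the case where $P$ is invertible. The key observation is the identity
\begin{equation*}
\begin{bmatrix} P & Q \\ R & S \end{bmatrix}
=
\begin{bmatrix} I_p & 0 \\ R P^{-1} & I_s \end{bmatrix}
\begin{bmatrix} P & 0 \\ 0 & S - R P^{-1} Q \end{bmatrix}
\begin{bmatrix} I_p & P^{-1} Q \\ 0 & I_s \end{bmatrix},
\end{equation*}
which I would verify by multiplying out the right-hand side block by block: the $(1,1)$ block gives $P$, the $(1,2)$ block gives $Q$, the $(2,1)$ block gives $R P^{-1} P = R$, and the $(2,2)$ block gives $R P^{-1} Q + (S - R P^{-1} Q) = S$. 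Taking determinants on both sides, the two outer factors are block-triangular with identity diagonal blocks and so have determinant $1$, while the middle factor is block-diagonal with determinant $\det(P) \cdot \det(S - R P^{-1} Q)$. This yields the first claimed formula.

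For the case where $S$ is invertible, the symmetric approach uses the dual factorization
\begin{equation*}
\begin{bmatrix} P & Q \\ R & S \end{bmatrix}
=
\begin{bmatrix} I_p & Q S^{-1} \\ 0 & I_s \end{bmatrix}
\begin{bmatrix} P - Q S^{-1} R & 0 \\ 0 & S \end{bmatrix}
\begin{bmatrix} I_p & 0 \\ S^{-1} R & I_s \end{bmatrix},
\end{equation*}
which I would again verify by direct block multiplication. Taking determinants gives $\det(M) = \det(P - Q S^{-1} R) \cdot \det(S)$, as desired.

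There is no real obstacle here; the main thing to be careful about is the order of the factors and the consistent use of the block multiplication rule. The role of invertibility of $P$ (respectively $S$) is only to make sense of the Schur complement $S - R P^{-1} Q$ (respectively $P - Q S^{-1} R$) and of the triangular factors. Since the paper only invokes this lemma in situations where the pivot block is automatically invertible, no additional continuity or limiting argument is required.
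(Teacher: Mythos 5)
The paper does not prove this lemma; it is cited from an external reference (\cite[p.~50]{schurlem}), so there is no in-paper proof to compare against. Your argument via the block LDU factorization
\begin{equation*}
\begin{bmatrix} P & Q \\ R & S \end{bmatrix}
=
\begin{bmatrix} I_p & 0 \\ R P^{-1} & I_s \end{bmatrix}
\begin{bmatrix} P & 0 \\ 0 & S - R P^{-1} Q \end{bmatrix}
\begin{bmatrix} I_p & P^{-1} Q \\ 0 & I_s \end{bmatrix}
\end{equation*}
(and its dual for invertible $S$) is correct, complete, and is the standard textbook derivation: the block multiplications you describe check out, the two outer block-triangular factors have determinant $1$, and the middle block-diagonal factor gives exactly $\det(P)\det(S-RP^{-1}Q)$. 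The only implicit fact used, namely that a block-triangular matrix with identity diagonal blocks has determinant $1$ and that a block-diagonal determinant factors, is standard and needs no further justification at the level of the paper. Your closing remark is also apt: in this paper the lemma is only applied with pivot block $\eta I_k$ or a $1\times 1$ block equal to $\eta$ with $\eta\ne 0$, so invertibility of the pivot is automatic and no limiting argument is needed.
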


\noindent An immediate corollary of the lemma above is that for a $2 \times 2$ block matrix of the form $$R=\begin{bmatrix} a & \textbf{b} \\ \textbf{c} & \eta I_k \end{bmatrix} $$ where $\eta \ne 0$, and $\textbf{b}$, and $\textbf{c}$ are $1 \times k$, and $k \times 1$ vectors respectively, we have $$\det (R)=\eta^k ( a-\textbf{b}\eta^{-1} \textbf{c} ) $$

\noindent Now, we are ready to derive a  formula regarding the determinant of the $3 \times 3$ block matrices of the form mentioned above.

\begin{lem}
\label{aboutalpha}
Let $R$ be a $(k+2) \times (k+2)$ matrix that can be written of the form $R=\begin{bmatrix} a & b & \textbf{c} \\  b^* & a & \textbf{d} \\ \textbf{c}^* & \textbf{d}^* & \eta I_k \end{bmatrix}$, where $\textbf{c}=(c_1,c_2,...,c_k)$, $\textbf{d}=(d_1,d_2,...,d_k)$ are $1 \times k$ vectors. For each $1 \leq i \leq k$, define the vectors $\textbf{c}_i$, and $\textbf{d}_i$ as $1 \times (k-1)$ vectors obtained by removing the $i$th entry  $c_i$ and $d_i$ from the vectors $\textbf{c}$ and $\textbf{d}$ respectively. Then we have $$\det(R)= \eta^{k-2} \Big( a^2 \eta^2 -a \eta (dd^*+ cc^*)  -bb^* \eta^2+2Re( b\textbf{d}\textbf{c}^*) +\gamma \Big)   $$

\noindent where $\gamma$ is defined as  $\gamma:=   \sum_{i=1}^k |c_i|^2 \textbf{d}_i \textbf{d}_i^* -\sum_{i=1}^k c_i d_i^* \textbf{d}_i\textbf{c}_i^* $.

\end{lem}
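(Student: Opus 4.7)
The plan is to apply the Schur determinant formula to $R$ viewed as a $2\times 2$ block matrix whose $(2,2)$ block is the invertible scalar matrix $\eta I_k$ (so we work under $\eta\neq 0$ and extend the resulting polynomial identity to $\eta=0$ at the end). Write
\begin{equation*}
R=\begin{bmatrix} P & Q \\ Q^* & \eta I_k \end{bmatrix},\qquad
P=\begin{bmatrix} a & b \\ b^* & a \end{bmatrix},\qquad
Q=\begin{bmatrix} \textbf{c} \\ \textbf{d} \end{bmatrix},
\end{equation*}
where $Q$ is the $2\times k$ block whose two rows are $\textbf{c}$ and $\textbf{d}$. The Schur formula then gives $\det(R)=\eta^k\,\det\!\bigl(P-\tfrac{1}{\eta}QQ^*\bigr)$, reducing the computation to the determinant of an explicit $2\times 2$ matrix.

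Next I would expand $QQ^*$ by block multiplication to obtain $QQ^*=\begin{bmatrix}\textbf{c}\textbf{c}^* & \textbf{c}\textbf{d}^* \\ \textbf{d}\textbf{c}^* & \textbf{d}\textbf{d}^*\end{bmatrix}$, and then expand $\det\!\bigl(P-\tfrac{1}{\eta}QQ^*\bigr)$ as a $2\times 2$ determinant. The expansion splits cleanly into three pieces: a $P$-only piece giving $a^2-bb^*$; mixed terms of the form $\tfrac{1}{\eta}\bigl(a\,\textbf{c}\textbf{c}^* + a\,\textbf{d}\textbf{d}^*\bigr)$ and $\tfrac{1}{\eta}\bigl(b\,\textbf{d}\textbf{c}^* + b^*\,\textbf{c}\textbf{d}^*\bigr)$, where the second pair combines into a multiple of $2\,\mathrm{Re}(b\,\textbf{d}\,\textbf{c}^*)$ since $(b\,\textbf{d}\,\textbf{c}^*)^{*}=b^*\,\textbf{c}\,\textbf{d}^*$; and finally a $\tfrac{1}{\eta^2}$-term equal to $\textbf{c}\textbf{c}^*\,\textbf{d}\textbf{d}^* - \textbf{c}\textbf{d}^*\,\textbf{d}\textbf{c}^*$. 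Multiplying through by $\eta^k$ clears denominators and produces the $\eta^{k-2}(\cdots)$ structure stated in the lemma.

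The only step that requires genuine thought, and that I expect to be the main obstacle, is identifying the tail term with $\gamma$. Expanding the inner products entrywise yields
\begin{equation*}
\textbf{c}\textbf{c}^*\,\textbf{d}\textbf{d}^* \;-\; \textbf{c}\textbf{d}^*\,\textbf{d}\textbf{c}^*
\;=\;\sum_{i,j}\bigl(|c_i|^2|d_j|^2 - c_i\,d_i^*\,d_j\,c_j^*\bigr).
\end{equation*}
The diagonal terms $i=j$ contribute $|c_i|^2|d_i|^2 - |c_i|^2|d_i|^2=0$ and therefore drop out. The remaining sum over $i\neq j$ is precisely what one obtains by reindexing: fixing $i$ and summing over $j\neq i$ recovers $\textbf{d}_i\textbf{d}_i^*$ in the first piece and $\textbf{d}_i\textbf{c}_i^*$ in the second, since removing the $i$th coordinate of $\textbf{d}$ and $\textbf{c}$ is exactly the restriction to $j\neq i$. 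This gives $\gamma = \sum_i |c_i|^2\textbf{d}_i\textbf{d}_i^* - \sum_i c_i d_i^*\,\textbf{d}_i\textbf{c}_i^*$, closing the argument.

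Finally, because every quantity on both sides is polynomial in $\eta$, the identity derived under the assumption $\eta\neq 0$ extends by continuity (or, equivalently, by a Zariski-density argument in the entries) to all $\eta\in\mathbb{C}$, so no separate treatment of the $\eta=0$ case is needed. The remainder of the argument is bookkeeping, which I would relegate to a one-line parenthetical rather than a separate lemma.
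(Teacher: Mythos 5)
Your argument is correct and is a genuine improvement over the paper's route. The paper first expands $\det(R)$ by cofactors along the first row, producing $k+2$ minors, then applies Schur's determinant formula to each of those (plus further cofactor expansions), which leads to several displays of bookkeeping. You instead apply Schur's formula exactly once, to $R$ partitioned with the $2\times2$ top-left block $P$ and the $2\times k$ block $Q=\begin{bmatrix}\textbf{c}\\\textbf{d}\end{bmatrix}$, which collapses everything to the $2\times2$ determinant $\det\bigl(P-\tfrac1\eta QQ^*\bigr)$; the resulting expansion is short and transparent, and your entrywise identification of the $\eta^{-2}$ piece with $\gamma$ (the $i=j$ terms cancel and the $i\ne j$ sum is precisely $\sum_i|c_i|^2\textbf{d}_i\textbf{d}_i^*-\sum_i c_id_i^*\textbf{d}_i\textbf{c}_i^*$) is correct. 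One small remark: your derivation correctly gives $2\eta\,\mathrm{Re}(b\textbf{d}\textbf{c}^*)$ inside the $\eta^{k-2}(\cdots)$ bracket; the lemma as printed drops the factor of $\eta$ there (a slip that also appears in the last displayed line of the paper's proof but not in the line above it, nor in the downstream use in \eqref{generalp}). This has no practical consequence in the paper since $\mathrm{Re}(b\textbf{d}\textbf{c}^*)=0$ for the Paley construction, but it is worth noting that your computation recovers the correct form. The continuity argument for $\eta=0$ is fine, though unnecessary for the paper's application where $\eta$ is always near $1$.
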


Note that the determinant of a Hermitian matrix must be a real number. Now, apart from the term $\sum_{i=1}^k c_i d_i^* \textbf{d}_i\textbf{c}_i^* $, other terms are obviously real. This term is also real because if say $c_{\ell}, d_{\ell} \ne 0$, then the term $c_{\ell} d_{\ell}^* \textbf{d}_{\ell}\textbf{c}_{\ell}^*$ will be appeared in the sum above. Now, if $c_{m}, d_{m} \ne 0$, for some $m \ne \ell$, then $c_m^* d_m$ will be one of the terms appearing in the expansion of $\textbf{d}_{\ell}\textbf{c}_{\ell}^*$. Hence, $c_{\ell}d_{\ell}^* c_m^* d_m$ will be a generic non-zero term of the expansion of $c_{\ell} d_{\ell}^*\textbf{d}_{\ell}\textbf{c}_{\ell}^*$. Next, note that this term will be accompanied by $c_{m}d_{m}^* c_{\ell}^* d_{\ell}$ which is a generic term in the expansion of $c_{m} d_{m}^*\textbf{d}_{m}\textbf{c}_{m}^*$. Thus, every term in this sum will be accompanied by its complex conjugate, and hence, this term is also real.

\begin{proof}[Proof of Lemma \ref{aboutalpha}] Expanding the determinant along the first row we obtain

$$ \begin{aligned} \det(R)& =a \det \begin{bmatrix} a & \textbf{d} \\ \textbf{d}^* & \eta I_k \end{bmatrix} -b \det \begin{bmatrix} b^* & \textbf{d} \\ \textbf{c}^* & \eta I_k \end{bmatrix}+c_1 \det \begin{bmatrix} b^* & a & d_2 & d_3 & ... & d_k   \\ c_1^*  & d_1^* &  0 & 0 & ... & 0 \\ c_2^* &  d_2^* & \eta & 0 & ... & 0  \\ \vdots \\ c_k^* & d_k ^* & 0 & 0 &  ... & \eta \end{bmatrix} -\\ & - c_2 \det \begin{bmatrix} b^* & a & d_1 & d_3 & \hdots & d_k \\ c_1^* & d_1^* & \eta & 0 &  \hdots & 0 \\ c_2^* & d_2^* & 0  & 0 & \hdots & 0 \\c_3^* & d_3^* & 0 & \eta & ... & 0 \\ \vdots \\ c_k^* & d_k^* & 0 & 0 & \hdots  & \eta \end{bmatrix} + \hdots \\ & + (-1)^{k+1} c_k\begin{bmatrix} b^* & a & d_1 & d_2 & d_3 & \hdots & d_{k-1} \\ c_1^* & d_1^* & \eta & 0 & 0 &  \hdots & 0 \\ c_2^* & d_2^* & 0  & \eta & 0 & \hdots & 0 \\c_3^* & d_3^* & 0 & 0 &  \eta & ... & 0 \\ \vdots \\ c_k^* & d_k^* & 0 & 0 & 0  & \hdots  & 0 \end{bmatrix}  \end{aligned} $$ 

\noindent For the first two terms above, we use Schur determinant formula, and we expand the remaining terms along the rows with more number of zeros. $$ \begin{aligned} \det(R) &= a  \eta^k (a - \eta^{-1} \textbf{d} \textbf{d}^* ) -b \eta^k ( b^*- \eta^{-1} \textbf{d}\textbf{c}^*) +c_1 \Big(- c_1^* \det \begin{bmatrix} a & \textbf{d}_1 \\ \textbf{d}_1^* & \eta I_{k-1} \end{bmatrix} \\ & +d_1^* \begin{bmatrix} b^* & \textbf{d}_1 \\ \textbf{c}_1^* & \eta I_{k-1} \end{bmatrix} \Big) - c_2 \Big( c_2^* \begin{bmatrix} a & \textbf{d}_2 \\ \textbf{d}_2^* & \eta I_{k-1} \end{bmatrix} -d_2^* \det \begin{bmatrix} b^* & \textbf{d}_2 \\ \textbf{c}_2^* & \eta I_{k-1} \end{bmatrix} \Big) + \hdots  \\ & + (-1)^{k+1}  c_k \Big( (-1)^{k+2} c_k^* \begin{bmatrix} a & \textbf{d}_k \\ \textbf{d}_k^* & \eta I_{k-1} \end{bmatrix} +(-1)^{k+3} d_k^* \det \begin{bmatrix} b^* & \textbf{d}_k \\ \textbf{c}_k^* & \eta I_{k-1} \end{bmatrix} \Big) \end{aligned} $$ 

\noindent Next, we use the Schur determinant formula to expand the determinant of $2 \times 2$ block matrices above. $$\begin{aligned} \det(R) & = a^2 \eta^k -a \eta^{k-1} \textbf{d} \textbf{d}^* -b b^* \eta^k + b \eta^{k-1} \textbf{d}\textbf{c}^*  - \\ & \Big( c_1c_1^* \eta^{k-1} a -\eta^{k-2} c_1c_1^* \textbf{d}_1 \textbf{d}_1^*  +c_2c_2^* \eta^{k-1} a -\eta^{k-2} c_2 c_2^*\textbf{d}_2\textbf{d}_2^*+...\\ & +c_kc_k^* \eta^{k-1}a -\eta^{k-2} c_k c_k^* \textbf{d}_k \textbf{d}_k^* \Big)  \\ & + c_1d_1 b^* \eta^{k-1}- c_1d_1 \textbf{d}_1\textbf{c}_1^* \eta^{k-2} + \hdots + c_k d_k^* b^* \eta^{k-1} -c_k d_k^* \textbf{d}_k\textbf{c}_k^* \eta^{k-2} \\ & = \eta^{k-2} \Bigg( a^2 \eta^2 -a\eta \textbf{d}\textbf{d}^* -bb^* \eta^2 +b \eta \textbf{d}\textbf{c}^* -a \textbf{c} \textbf{c}^* \eta+\sum_{i=1}^k c_ic_i^* \textbf{d}_i\textbf{d}_i^* \\ & + \eta b^* \textbf{c} \textbf{d}^* -\sum_{i=1}^k c_i d_i^* \textbf{d}_i\textbf{c}_i^* \Bigg) \\ &=  \eta^{k-1} \Bigg( a^2 \eta -a \textbf{d}\textbf{d}^* -bb^* \eta  +2 Re(b  \textbf{d}\textbf{c}^*) -a \textbf{c} \textbf{c}^* \Bigg) \\& +\eta^{k-2} \Bigg( \sum_{i=1}^k c_ic_i^* \textbf{d}_i\textbf{d}_i^* -\sum_{i=1}^k c_i d_i^* \textbf{d}_i\textbf{c}_i^* \Bigg) \\ &= \eta^{k-2} \Big( a^2 \eta^2 -a \eta (\textbf{d}\textbf{d}^*+ \textbf{c}\textbf{c}^*)  -bb^* \eta^2+2Re( b\textbf{d}\textbf{c}^*) +\alpha \Big) \end{aligned}  $$

\noindent where $\gamma$ is the expression defined as  $\gamma:=  \Bigg( \sum_{i=1}^k c_ic_i^* \textbf{d}_i\textbf{d}_i^* -\sum_{i=1}^k c_i d_i^* \textbf{d}_i\textbf{c}_i^* \Bigg)$. \end{proof}

\noindent Using the lemmas proved above,  we show that the RIP constant of the Paley CS matrices-- as defined in Definition \ref{tilde}-- can be improved where the improvement term is  a universal constant, unlike the situation in the previous section (see Theorem \ref{mainth}), where the improvement term was dependent on the sparsity level.   

\begin{Theor} 
\label{mainthgeneral}

Let $p \geq 7$ be a prime number such that $p \equiv 3 \mod 4$.   The RIP constants of the measurement matrix $\tilde{\Phi}$ as given in Definition \ref{tilde} satisfies $$\delta_k \leq \frac{k-1-\frac{2}{3}(2-\sqrt{3})}{\sqrt{p}} $$

\end{Theor}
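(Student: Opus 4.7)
The plan is to follow the inductive template of the proof of Theorem \ref{mainth} but replace the classical $2\times 2$ Dembo bound of Theorem \ref{dembobounds} with the generalized $3\times 3$ version given by Lemma \ref{newboundlem}. That is, in the inductive step one peels off \emph{two} rows and columns at a time from a $(k+2)\times(k+2)$ Gramian instead of one. The intuition is that a two-step peel captures the extra ``cancellation'' coming from the mixed $\pm i$ signs in the off-diagonals of $\tilde{\Phi}^*\tilde{\Phi}$ (which is precisely what the Legendre-symbol structure of the Paley construction provides), producing a recursion whose additive correction saturates at a constant $\tfrac{2}{3}(2-\sqrt{3})$ rather than decaying like $1/(k-1)$ as in Theorem \ref{mainth}.

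For the base case I would note that $k-1-\tfrac{1}{c(k-1)} = k-1-\tfrac{2(2-\sqrt{3})}{k-1}$ equals $k-1-\tfrac{2}{3}(2-\sqrt{3})$ exactly at $k=4$, so Lemma \ref{orderthree} together with Theorem \ref{mainth} itself dispatches $k\in\{3,4\}$. For the inductive step, write a candidate maximizer $G_{k+2}^{\max} = \tilde{\Phi}_\Gamma^*\tilde{\Phi}_\Gamma$ in the block form of Lemma \ref{newboundlem} with $a=1$, $b=\pm i/\sqrt{p}$, and $\mathbf{c},\mathbf{d}$ the vectors of inner products of the two selected columns with the remaining $k$. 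Substituting $Q\leftarrow \eta_k I$ with $\eta_k$ equal to the inductive upper bound on $\lambda_{\max}(\tilde{\Phi}_{\Gamma'}^*\tilde{\Phi}_{\Gamma'})$ (for the $k\times k$ subfamily $\Gamma'\subset\Gamma$), one then invokes Lemma \ref{aboutalpha} to express $\det(R_1 - \lambda I)$ as $(\eta_k-\lambda)^{k-2}$ times a cubic in $\lambda$. The largest root of that cubic is, by Lemma \ref{newboundlem}, an upper bound on $\lambda_{\max}(G_{k+2}^{\max})$, and the symmetric argument with $R_2$ handles the minimum eigenvalue.

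The bulk of the remaining work is algebraic. For the Paley construction one has $|b|^2 = 1/p$, $\mathbf{c}\mathbf{c}^* = \mathbf{d}\mathbf{d}^* = k/p$, and each $c_i d_i^*$ is $\pm 1/p$, so the cross quantities $\mathrm{Re}(b\mathbf{d}\mathbf{c}^*)$ and $\gamma = \sum_i |c_i|^2 \mathbf{d}_i\mathbf{d}_i^* - \sum_i c_i d_i^*\mathbf{d}_i\mathbf{c}_i^*$ are of size $O(k/p^2)$ uniformly in the sign pattern. Plugging these magnitudes into the cubic and checking — by substituting $\lambda = 1 + (k+1-\tfrac{2}{3}(2-\sqrt{3}))/\sqrt{p}$ into the polynomial and verifying its sign — that this value exceeds every root, closes the induction.

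The main obstacle is the algebra of the cubic: unlike the quadratic root formula used in Theorem \ref{mainth}, here there is no clean closed form, so one is forced into a substitute-and-check argument whose bookkeeping in $1/\sqrt{p}$ powers is delicate. A subsidiary difficulty is the term $\gamma$: although $|\gamma| \lesssim k(k-1)/p^2$ over arbitrary signs, one must show that the \emph{worst} sign pattern — which is exactly the one the induction needs to cover — still fits within the cubic's sign bound. The payoff is that $\gamma$ only appears in the $3\times 3$ formula, not the $2\times 2$ formula, and it is precisely its presence that converts the vanishing correction of Theorem \ref{mainth} into the universal additive constant $\tfrac{2}{3}(2-\sqrt{3})$ claimed in Theorem \ref{mainthgeneral}.
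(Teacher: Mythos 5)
Your proposal identifies the correct skeleton — inducting by two, using Lemma \ref{newboundlem} to replace the trailing $k\times k$ block by a scalar multiple of the identity, invoking Lemma \ref{aboutalpha} to obtain the characteristic polynomial, and using the base cases $k=3$ (Lemma \ref{orderthree}) and $k=4$ (Theorem \ref{mainth}, where the two bounds coincide). This matches the paper's route. However, there is a genuine gap in your handling of the term $\gamma$, and it is not a bookkeeping issue: it is the single observation that makes the whole argument close.

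You write that $\gamma = \sum_i |c_i|^2 \mathbf{d}_i\mathbf{d}_i^* - \sum_i c_i d_i^*\mathbf{d}_i\mathbf{c}_i^*$ is of size $O(k/p^2)$ ``uniformly in the sign pattern,'' and later that $|\gamma|\lesssim k(k-1)/p^2$; the two claims already disagree, and the second is the correct order. But the real issue is that bounding $|\gamma|$ is not enough. In the expression Lemma \ref{aboutalpha} yields, after using $\operatorname{Re}(b\mathbf{d}\mathbf{c}^*)=0$, the parenthetical factor is $(\eta_k-\lambda)\,q(\lambda)+\gamma$ where $q$ is a cubic. At the critical value $C=D+2$ one has $(\eta_k-\lambda)q(\lambda)\sim 2k/p^2$, which is of \emph{smaller} order than the generic size $k^2/p^2$ of $\gamma$. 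So if $\gamma$ were allowed to be negative of its full generic magnitude, the claimed sign of the parenthetical factor would fail and the induction would not close. The paper's proof works only because it proves $\gamma\geq 0$ always: for the Paley matrix the first sum equals $k(k-1)/p^2$ exactly, while the second has each $|\textbf{d}_i\textbf{c}_i^*|\le (k-1)/p$ so it is at most $k(k-1)/p^2$, hence $\gamma\ge 0$. With that in hand, and with $\eta_k-\lambda<0$, the term $\gamma$ can simply be discarded and one is left to show $q(\lambda)<0$, which is elementary. Your proposal treats $\gamma$ as a nuisance to be absorbed by its magnitude; you need to observe and use its non-negativity.

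One further smaller point: checking the sign of the polynomial at the single value $\lambda=1+\bigl(k+1-\tfrac{2}{3}(2-\sqrt{3})\bigr)/\sqrt{p}$ does not by itself show that this value dominates every root; you also need a monotonicity argument (the paper differentiates $q$ in $C$ and shows it stays negative for all $C\geq k-\tfrac13+\tfrac{2\sqrt{3}}{3}$). This is a minor omission compared to the $\gamma\geq 0$ gap, but it should be made explicit.
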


\begin{proof} The idea of the proof is similar to the one given for Theorem \ref{mainth}, and we use  similar notation. Hence, we assume that the result holds for $k$ which gives an upper bound and a lower bound for the eigenvalues of a Gramian matrix of the size $k \times k$, and we prove the statement for $(k+2)$. Therefore, the proof includes two main steps, one regarding the maximum eigenvalue, and one regarding the minimum eigenvalue. \\

\begin{enumerate}
\item

 We will prove that $\lambda_{\max}(G_k^{\max}) \leq 1+ \frac{k-1-\frac{2}{3}(2-\sqrt{3}-\alpha)}{\sqrt{p}}$ for $k\geq 3$ using induction. To that end, we first verify the statement for $k=3$ and $k=4$; then we finish by assuming it holds for $k$, and proving that this implies it holds for $(k+2)$. The induction base ($k=3$) holds by Lemma \ref{orderthree}, since by this lemma, $$\lambda_3^{\max} \leq 1+\frac{\sqrt{3}}{\sqrt{p}}=1+\frac{2-(2-\sqrt{3})}{\sqrt{p}} \leq 1+ \frac{2-\frac{2}{3} (2-\sqrt{3})}{\sqrt{p}}. $$ 
\noindent  The other induction base ($k=4$) also holds by Theorem \ref{mainth}. Setting  $k=4$ in this theorem, we obtain $\delta_4 \leq \frac{3-\frac{1}{3c}}{\sqrt{p}}$, which implies $$\lambda_4^{\max} \leq 1+\frac{3-\frac{1}{3c}}{\sqrt{p}}=1+\frac{3-\frac{2}{3}\frac{1}{2c}}{\sqrt{p}}=1+\frac{3-\frac{2}{3}(2-\sqrt{3})}{\sqrt{p}} $$

\noindent Next, consider $G_{k+2}^{\max}$, the $(k+2) \times (k+2)$ matrix obtained from the Gramian matrix indexed by the set $\Gamma=\{r_1,r_2,...,r_{k+2}\} \subseteq \{1,2,...,p \}$ (with $|\Gamma|=k+2$,) and we write  it in the following form.  

\begin{equation} \label{gkmax} G_{k+2}^{\max}=\begin{bmatrix} 1 & b & \textbf{c} \\ b^*  & 1 &  \textbf{d} \\ \textbf{c}^* & \textbf{d}^* & Q  \end{bmatrix} \end{equation}

 Here, $b=\langle \tilde{\Phi}_{r_1},\tilde{\Phi}_{r_2} \rangle$, $\textbf{c}=(c_1,...,c_k)$, $\textbf{d}=(d_1,...,d_k)$, $c_i=\langle \tilde{\Phi}_{r_1},\tilde{\Phi}_{r_{i+2}} \rangle $, $d_i=\langle \tilde{\Phi}_{r_2}, \tilde{\Phi}_{r_{i+2}} \rangle $ (with $1 \leq i \leq k$). By the construction of $\tilde{\Phi}$, each non-diagonal entry of $G_{k+2}^{\max}$ (including $b,c_i$'s and $d_i$'s) is $\pm \frac{i}{\sqrt{p}}$. On the other hand, we know by Lemma \ref{newboundlem}, that the maximum eigenvalue of $G_{k+2}^{\max}$ is bounded from above by the maximum eigenvalue of   \begin{equation} \label{bkmax} B_{k+2}^{\max}=\begin{bmatrix} 1 & b & \textbf{c} \\ b^*  & 1 &  \textbf{d} \\ \textbf{c}^* & \textbf{d}^* & \eta_k I \end{bmatrix} \end{equation}  where $\eta_k$ is the upper bound on the maximum eigenvalue of $Q$ and by induction hypothesis can be written in the form $\eta_k=1+\frac{D}{\sqrt{p}}$, with $D=k-1-\frac{2}{3}(2-\sqrt{3})$. Next, let $\lambda=1+\frac{C}{\sqrt{p}}$, with $C>D+2=k+1-\frac{2}{3}(2-\sqrt{3})$. If we show that $p(\lambda) \ne 0$, where $p(x)$ is the characteristic polynomial of $B_{k+2}^{\max}$, then this shows that $1+\frac{k+1-\frac{2}{3}(2-\sqrt{3})}{\sqrt{p}}$ is an upper bound for the maximum eigenvalue of $B_{k+2}^{\max}$, and hence, for $G_{k+2}^{\max}$ as desired.  Note that $p(x)=\det(B_{k+2}^{\max}-xI)$ can be written as   $ p(x)=\det \begin{bmatrix} 1-x & b & \textbf{c} \\ b^*  & 1-x &  \textbf{d} \\ \textbf{c}^* & \textbf{d}^* & (\eta_k-x) I  \end{bmatrix}$, where each non-diagonal entry is $\pm i/\sqrt{p}$.  Following the notation of Lemma \ref{aboutk},  we have 
     
 \begin{equation} \label{generalp} \begin{aligned} p(x) & = (\eta_k-x)^{k-2} \Big( (1-x))^2 (\eta_k-x)^2 -(1-x)(\eta_k-x) (\textbf{d}\textbf{d}^*+\textbf{c}\textbf{c}^*)- \\ & (\eta_k-x)^2 bb^* +2(\eta_k-x)Re(b\textbf{d}\textbf{c}^*) +\gamma \Big)  \end{aligned} \end{equation}
 
 \noindent where  $\gamma:=   \sum_{i=1}^k |c_i|^2 \textbf{d}_i \textbf{d}_i^* -\sum_{i=1}^k c_i d_i^* \textbf{d}_i\textbf{c}_i^* $.  Next, we prove that $\gamma \geq 0$. Note that each entry of vectors $\textbf{c}$ and $\textbf{d}$ is $\pm \frac{i}{\sqrt{p}}$. Thus, $ \sum_{i=1}^k c_ic_i^* \textbf{d}_i\textbf{d}_i^* =\frac{k(k-1)}{p^2}$. Also, for each $1 \leq i \leq k$, $\textbf{d}_i \textbf{c}_i^*$ contains $(k-1)$ terms, each with magnitude $\frac{1}{p}$. Thus,  $\sum_{i=1}^k c_i d_i^* \textbf{d}_i \textbf{c}_i^*  \leq \frac{k (k-1)}{p^2} $. This implies that $\gamma \geq 0$. \\

  Furthermore, since $C>D$, we have $\eta_k-\lambda < 0$. Accordingly, to show $p(\lambda) \ne 0$, considering the fact that $\gamma \geq 0$, it is enough to show that $$q(\lambda)=(1-\lambda)^2 (\eta_k-\lambda) -(1-\lambda) (\textbf{d}\textbf{d}^*+\textbf{c}\textbf{c}^*) -(\eta_k-\lambda) bb^*  <0$$

 %from here omitted
 
% Suppose that $c_{i_1} \ne 0$, ..., $c_{i_m} \ne 0$, and $d_{j_1} \ne 0$,..., $d_{j_n} \ne 0$ are the only non-zero entries of vectors $\textbf{c}$ and $\textbf{d}$ respectively. Also, suppose that $|\{i_1,...,i_m\} \cap \{j_1,...,j_n\}|=\ell \leq \min\{m,n\}$. Then, for each $1 \leq i \leq k$, the vector $\textbf{d}_i$ has \textit{at least} $n-1$ non-zero entries (each with magnitude $\frac{1}{\sqrt{p}}$). Hence, $$ \sum_{i=1}^k c_ic_i^* \textbf{d}_i\textbf{d}_i^* \geq  \frac{m(n-1)}{p^2}. $$ Moreover, for each index $i$ such that $c_id_i \ne 0$, the term $\textbf{d}_i \textbf{c}_i^*$ contains \textit{at most} $(\ell-1)$ non-zero terms, each one with magnitude $\frac{1}{p}$, and thus, $$\sum_{i=1}^k c_i d_i^* \textbf{d}_i \textbf{c}_i^*  \leq \frac{\ell (\ell-1)}{p^2} \leq \frac{m (n-1)}{p^2}.$$
 %until here. 

% this equation will be reduced to  $$ \begin{aligned}  p(x) & =  \frac{1}{p} \Bigg( C^2 (D-C)^2 +C(D-C) (\textbf{d}\textbf{d}^*+\textbf{c}\textbf{c}^*)  + (D-C)^2+2 (D-C)Re(bdc^*) +\alpha \Big) \end{aligned} $$
% 
%  \Big( k(x)(\eta_k-1+k(x))-dd^*\Big)-b\Big(b^*( \eta_k-1+k(x)) -dc^*\Big) +c \Big( b^* d^* -c^*k(x) \Big) \\ &=(k(x))^2(\eta_k-1+k(x))-(dd^*+bb^*+cc^*)k(x) -bb^*(\eta_k-1)+2 |Re (bdc^*)| \end{aligned} $$

\noindent where we used the fact that $Re(b \textbf{d} \textbf{c}^*)=0$.  Also, using $\eta_k=1+\frac{D}{\sqrt{p}}$, $\lambda=1+\frac{C}{\sqrt{p}}$, $\textbf{c}^* \textbf{c} = \frac{k}{p}$, and $\textbf{d}^*\textbf{d} = \frac{k}{p}$, we have   \begin{equation} \label{aboutp} q(\lambda ) = \frac{1}{p \sqrt{p}} \Big( C^2(D-C)+C(2k+1)-D \Big) \end{equation}

 To show that $q(\lambda)<0$ for any $C>D+2$, first  let $C=D+2=k-1/3+2\sqrt{3}/3$, and recall that $D= k-7/3+2\sqrt{3}/3$. Then,   $$ \begin{aligned} q(\lambda) & =\frac{1}{p \sqrt{p}} \Big( -2C^2+C(2k+1)-(k-7/3+2\sqrt{3}/3) \Big) \\ & =\frac{1}{p \sqrt{p}} \Big( -2(k+\frac{2 \sqrt{3}}{3}-\frac{1}{3})^2+(k+\frac{2 \sqrt{3}}{3}-\frac{1}{3})(2k+1) \\ & \ \ \ \ \  \ \ \ \ \ \ \ \ -k+\frac{7}{3}-\frac{2\sqrt{3}}{3} \Big) \\ & = \frac{1}{p \sqrt{p}} \Big( \big(-\frac{2}{3}(2\sqrt{3} -1)\big)k+2-\frac{2}{9}(2\sqrt{3}-1)^2 \Big) \\ &   \leq \frac{1}{p \sqrt{p}} \Big( \big( -\frac{2}{3}(2\sqrt{3}-1) \big)k+0.651 \Big)     < 0 \end{aligned} $$ for $k \geq 1$.  

\noindent On the other hand, if we substitute $D=k-7/3+2\sqrt{3}/3$, and subsequently differentiate the right hand side of \eqref{aboutp}, we obtain  $$\begin{aligned} \frac{d}{dC} & \Big( C^2(k-7/3+2\sqrt{3}/3-C)+C(2k+1) \Big) \\ & =-3C^2+2(k-7/3+2\sqrt{3}/3)C+2k+1  :=g(C)<0 \end{aligned} $$ for $C>k$. This is because  $g(k)=-k^2+(-8/3+4 \sqrt{3}/3) k+1<0$, and $g(C)$ is decreasing for $C>\frac{2k-14/3+\frac{4}{3}(\sqrt{3})}{6}$. Therefore, $q(\lambda)<0$ for every $\lambda=1+\frac{C}{\sqrt{p}}$ and $C> k-1/3+2\sqrt{3}/3$. Hence, an upper bound for the maximum eigenvalue of $(k+2) \times (k+2)$ matrix $R_2$ (and hence for $G_{k+2}^{\max}$) is $\lambda=1+\frac{k-1/3+2\sqrt{3}/3}{\sqrt{p}}$, as desired. Note this bound only depends on $|\Gamma|$, and not the elements of $\Gamma$.

\item We will prove that  $\lambda_{\min}(G_k^{\min}) \geq 1- \frac{k-1-\frac{2}{3}(2-\sqrt{3})}{\sqrt{p}}$ for $k\geq 3$ using induction. The proof is similar to above. For the sake of completeness, we state it briefly.  The induction base ($k=3$) holds by Lemma \ref{orderthree}, since by this lemma, $$\lambda_3^{\min} \geq 1-\frac{\sqrt{3}}{\sqrt{p}} =1-\frac{2-(2-\sqrt{3})}{\sqrt{p}} \geq 1- \frac{2-\frac{2}{3}(2-\sqrt{3})}{\sqrt{p}}$$
\noindent  The other induction base ($k=4$) also holds by Theorem \ref{mainth}: Set $k=4$ in this theorem. Then we obtain $\delta_4 \leq \frac{3-\frac{1}{3c}}{\sqrt{p}}$, which implies $$\lambda_4^{\min} \geq 1-\frac{3-\frac{1}{3c}}{\sqrt{p}}=1-\frac{3-\frac{2}{3}\frac{1}{2c}}{\sqrt{p}}=1-\frac{3-\frac{2}{3}(2-\sqrt{3})}{\sqrt{p}} $$

\noindent Next, consider the $(k+2) \times (k+2)$ Gramian matrix $G_{k+2}^{\min}$.  We write this matrix of the form given in \eqref{gkmax}, and we consider the matrix $B_{k+2}^{\min}$ of the form given in \eqref{bkmax}, and with $\eta_k$ replaced by $\eta_1$, namely, the lower bound for the minimum eigenvalue of $Q$. We write $\eta_1$ of the form  $\eta_1=1-D/\sqrt{p}$, and we consider $\lambda:=1-C/\sqrt{p}$ with $C>D+2$. We will consider $p(x)$, the characteristic polynomial of $B_{k+2}^{\min}$, and will show that $p(\lambda) \ne 0$. To do so, it is enough to show that

$$q(\lambda)=(1-\lambda)^2 (\eta_k-\lambda) -(1-\lambda) (\textbf{d}\textbf{d}^*+\textbf{c}\textbf{c}^*) -(\eta_k-\lambda) bb^*  >0$$

 The expression for $q(\lambda)$ can be simplified as 

%\noindent where we used the fact that $Re(b\textbf{d} \textbf{c}^*)=0$.  By the assumption hypothesis, we can write  $\eta_1=1-\frac{k-1-\frac{2}{3}(2-\sqrt{3})}{\sqrt{p}}=1-D/\sqrt{p}$, where $D$ is defined as $D:=k-\frac{7}{3}+\frac{2\sqrt{3}}{3}$.  Also, let $\lambda=1+C/\sqrt{p}$ where $C\geq k+1-\frac{2}{3}(2-\sqrt{3})=k-\frac{1}{3}+\frac{2\sqrt{3}}{3}$. Then  $k(\lambda)=1-\lambda=C/\sqrt{p}$.  Since $\textbf{c}^* \textbf{c} = \frac{k}{p}$, and $\textbf{d}^*\textbf{d} = \frac{k}{p}$, we have

   \begin{equation} \label{aboutp2} q(\lambda) = \frac{1}{p \sqrt{p}} \Big( C^2(C-D)-C(2k+1)+D \Big) \end{equation}

 Now,  let $C=k-1/3+2\sqrt{3}/3$, and recall that $D= k-7/3+2\sqrt{3}/3$. Next,   $$ \begin{aligned} q(\lambda) & = \frac{1}{p \sqrt{p}} \Big( 2C^2-C(2k+1)+(k-7/3+2\sqrt{3}/3) \Big) \\ &  \geq \frac{1}{p \sqrt{p}} \Big( \big( \frac{2}{3}(2\sqrt{3}-1)\big)k-0.651 \Big)     > 0 \end{aligned} $$ for $k \geq 1$. 

\noindent On the other hand, if we substitute $D=k-7/3 +2\sqrt{3}/3$, and subsequently differentiate the right hand side of \eqref{aboutp2}, we will get  $$\begin{aligned} \frac{d}{dC} & \Big( C^2(-k+7/3-2\sqrt{3}/3+C)-C(2k+1) \Big) \\ & =3C^2-2(k-7/3+2\sqrt{3}/3)C-2k-1  :=g(C)>0 \end{aligned} $$ for $C>k$. This is because  $g(k)=k^2+(8/3-4 \sqrt{3}/3) k-1>0$, and $g(C)$ is increasing for $C>\frac{2k-14/3+\frac{4}{3}(\sqrt{3})}{6}$. Therefore, $q(\lambda)>0$ for every $\lambda=1-\frac{C}{\sqrt{p}}$ and $C > k-1/3+2\sqrt{3}/3$. Hence, a lower bound for the minimum eigenvalue of $(k+2) \times (k+2)$ matrix $R_2$ (and hence for $G_{k+2}^{\min}$) is $\lambda=1-\frac{k-1/3+2\sqrt{3}/3}{\sqrt{p}}$, as desired. Note that this bound also only depends on $|\Gamma|$, and not the elements of $\Gamma$.

  Considering the calculations we did for the maximum and minimum eigenvalues of the Gramian matrices $R_1$ and $R_2$, we conclude that the RIP constant of order $(k+2)$ satisfies  $$\delta_{k+2} \leq \frac{k-1/3+2\sqrt{3}/3}{\sqrt{p}}$$ proving the theorem using induction. 
  
  \end{enumerate}
  
   \end{proof}

 \section{A path to break the square-root barrier using Dembo bounds} 
 \label{pathtobreak}

 In this section, we propose an approach that can lead to breaking the square-root barrier for the construction given in Definition \ref{tilde}, if a specific conjecture regarding the distribution of quadratic residues holds. Our approach is based on the generalized Dembo bounds as derived and explained in section \ref{generalizeddembo}. Let $\Phi$ denote the measurement matrix as defined in Definition \ref{tilde}. We saw in Section  \ref{specificmatrix} that if all upper diagonal entries of the Gramian matrix $G=\Phi_T^* \Phi_T$, corresponding to the index set $T=\{r_1,r_2,...,r_k \}$, are $i/\sqrt{p}$ (and all the lower elements are $-i/\sqrt{p}$) then we would have a multiplicative improvement for Gershgorin bound but the square root barrier can not be broken. Therefore, in such case, $$ \begin{aligned} & \Big( \frac{r_1-r_3}{p} \Big) = \Big( \frac{r_1-r_4}{p} \Big)= ...=\Big( \frac{r_1-r_k}{p} \Big) =1, \\  &    \Big( \frac{r_2-r_3}{p} \Big) = \Big( \frac{r_2-r_4}{p} \Big)= ...=\Big( \frac{r_2-r_k}{p} \Big) =1      \end{aligned} $$ 
 
 \noindent If we re-tag the columns as $r'_1=r_k, r'_2=r_{k-1},...,r'_k :=r_1$, then $$ \begin{aligned} & \Big( \frac{r'_1-r'_3}{p} \Big) = \Big( \frac{r'_1-r'_4}{p} \Big)= ...=\Big( \frac{r'_1-r'_k}{p} \Big) =-1, \\  &    \Big( \frac{r'_2-r'_3}{p} \Big) = \Big( \frac{r'_2-r'_4}{p} \Big)= ...=\Big( \frac{r'_2-r'_k}{p} \Big) =-1      \end{aligned} $$ 
 
 \noindent In either case, we have $$ \sum_{i \in I} \chi(i) \chi(i+a) =|I|$$ 
 
 \noindent where $\chi (x)=\Big( \frac{x}{p} \Big)$ denotes the Legendre symbol (and hence, is a Dirichlet character), $I:=\{r_1-r_3, r_1-r_4,...,r_1-r_k\}$, and $a=r_2-r_1$. Therefore, one can hope that if the opposite to this situation occurs in the following sense, then the square-root barrier may be \textit{broken}.

 \begin{conj} \label{numth}

 There exists constants $0<\alpha<1,$ and $\nu>1/2$, and a positive integer $m_{\alpha}$ such that for any set $\{r_1,...,r_k\}$ in $\mathbb{Z}_p$, with $m_{\alpha} \leq k \leq p^{\nu}$ there exist indices $1\leq i<j \leq k$ satisfying the following inequality

   $$\frac{|\sum_{\ell \in I_{r_i,r_j}} \chi(\ell) \chi(\ell+a)|}{|I_{r_i,r_j}|} <\alpha$$
 
 \noindent where $\chi(x)=\Big(\frac{x}{p} \Big)$, $a=r_j-r_i$, and $I_{r_i,r_j} :=\{r_i-r_{\ell}: 1\leq \ell \leq k, \ell \ne i, \ell \ne j\}$. We call $I_{r_i,r_j}$ a one-sided difference set.

 \end{conj}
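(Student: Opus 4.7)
The plan is to attack Conjecture \ref{numth} through a second-moment pigeonhole argument, which will establish the conclusion for $k$ sufficiently large relative to $p$ but will leave the low-$k$ regime as the main obstacle.

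After the substitution $\ell \mapsto r_\ell$, the quantity in the conjecture becomes
\begin{equation*}
T_{ij} \;:=\; \sum_{\ell \ne i, j}\chi\bigl((r_i - r_\ell)(r_j - r_\ell)\bigr),
\end{equation*}
and we seek at least one pair $(i,j)$ with $|T_{ij}| < \alpha(k-2)$. The analogous complete sum over the full field, $\sum_{u \in \mathbb{F}_p}\chi((x-u)(y-u)) = -1$ for any $x \ne y$ (obtained via the linear change of variable $w = x-u$), is bounded by a universal constant, so the obstruction is entirely in replacing $\mathbb{F}_p$ by the arbitrary $k$-subset $S := \{r_1,\ldots,r_k\}$.

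Argue by contradiction: if $|T_{ij}| \ge \alpha(k-2)$ for every $i<j$, then $\Sigma := \sum_{i<j} T_{ij}^2 \ge \alpha^2 \binom{k}{2}(k-2)^2 \asymp \alpha^2 k^4 /2$. Introduce the unrestricted sum $V_{u,v} := \sum_{x \in S}\chi((x-u)(x-v))$ for arbitrary $u,v \in \mathbb{F}_p$, and note that $T_{ij} = V_{r_i, r_j}$ since the potentially missing terms at $x \in \{r_i, r_j\}$ contribute $\chi(0) = 0$. Hence
\begin{equation*}
2\Sigma \;=\; \sum_{\substack{u, v \in S \\ u \ne v}} V_{u,v}^2 \;\le\; \sum_{u, v \in \mathbb{F}_p} V_{u,v}^2,
\end{equation*}
the last step being trivial as the summands are non-negative. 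Interchanging summation on the right side,
\begin{equation*}
\sum_{u, v \in \mathbb{F}_p} V_{u,v}^2 \;=\; \sum_{x, y \in S}\Bigl(\sum_{u \in \mathbb{F}_p} \chi\bigl((x-u)(y-u)\bigr)\Bigr)^2 \;=\; k(p-1)^2 + k(k-1) \;\le\; k p^2,
\end{equation*}
since the inner sum is $p-1$ when $x = y$ and $-1$ otherwise. Thus $\Sigma \le kp^2/2$, contradicting the assumed lower bound whenever $k^3 > p^2/\alpha^2$. This proves Conjecture \ref{numth} in the regime $k \gg p^{2/3}/\alpha^{2/3}$, i.e.\ on the upper sub-range of $[m_\alpha, p^\nu]$ when $\nu > 2/3$.

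The main obstacle is covering the full range $[m_\alpha, p^\nu]$ -- in particular the lower end where $k$ is as small as the constant $m_\alpha$, and the intermediate range $k = o(p^{2/3})$. The slack stems entirely from the crude inequality $\sum_{u,v \in S} V_{u,v}^2 \le \sum_{u,v \in \mathbb{F}_p} V_{u,v}^2$, which discards the constraint that $u, v$ lie in $S$. The heuristic expectation is $V_{u,v}^2 \asymp k$ rather than $\asymp p$, which would give $\Sigma \asymp k^3$ and hence the conjecture across essentially the entire range, but making this rigorous requires bounds on partial character sums $\sum_{x \in S}\chi(g(x))$ with $g$ a degree-$2$ polynomial and $S$ an arbitrary subset of $\mathbb{F}_p$ -- a setting beyond both Weil (which is for complete sums) and Burgess (which essentially requires $S$ to be an interval). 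Two natural refinements are (i) a fourth-moment version estimating $\sum_{i<j} T_{ij}^4$, which reduces to character sums of degree-$4$ polynomials and may harvest additional cancellation; and (ii) a structural dichotomy, handling additively structured $S$ via a Freiman-type reduction to a progression combined with Burgess' bound, and handling pseudorandom $S$ via Fourier expansion of $\mathbf{1}_S$ combined with Weil-type bounds on hybrid sums $\sum_x \chi(g(x)) e_p(\xi x)$. Descending to a genuinely constant base case $k \ge m_\alpha$ appears to require substantially new ideas from analytic number theory and is likely the deepest aspect of the conjecture.
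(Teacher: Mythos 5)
The statement you are addressing is labeled a \emph{Conjecture} in the paper, and the paper gives no proof of it --- it offers only a handful of small numerical checks (for $p=5$ and $p=19$) and a loose heuristic analogy with the known identity $|\sum_{d\in D}\chi(d)|=\sqrt{k-1}$ for perfect difference sets $D \subseteq \mathbb{Z}_p$. So there is no paper proof to compare against; what you have produced is a partial result toward an open problem, and must be assessed on that footing.

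Your second-moment computation is correct. The reduction of the conjectured quantity to $T_{ij}=\sum_{\ell\ne i,j}\chi\bigl((r_i-r_\ell)(r_j-r_\ell)\bigr)$ is exact, $|I_{r_i,r_j}|=k-2$ because the $r_\ell$ are distinct, and $T_{ij}=V_{r_i,r_j}$ holds because the two added terms vanish at $\chi(0)=0$. The complete-sum evaluation $\sum_{u\in\mathbb{F}_p}\chi\bigl((x-u)(y-u)\bigr)=-1$ for $x\ne y$ (and $=p-1$ for $x=y$) is standard, and the interchange of summation cleanly gives $\sum_{u,v\in\mathbb{F}_p}V_{u,v}^2=k(p-1)^2+k(k-1)\le kp^2$. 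The pigeonhole then forces some $|T_{ij}|<\alpha(k-2)$ once $\alpha^2k^3\gtrsim p^2$, i.e. for $k\gtrsim p^{2/3}/\alpha^{2/3}$.

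The gap is exactly the one you identify, but it is worth stressing how severe it is for the paper's intended use. Conjecture~\ref{numth} asks for the bound uniformly over $m_\alpha\le k\le p^\nu$ with $m_\alpha$ an absolute constant and $\nu>1/2$, and Proposition~\ref{squarerootb} consumes the conjecture inductively at \emph{every} intermediate size $k_1,k_1+2,\dots,k_0$, starting from the constant base $k_1\approx b_\alpha$. Your threshold $k\gtrsim p^{2/3}$ is vacuous when $\nu\le 2/3$, and even when $\nu>2/3$ (the paper's working choice is $\nu=4/5$ with an effective ceiling of $p^{5/7}$) it covers only the thin top slice $[p^{2/3},p^{\nu}]$; the constant-scale and $\sqrt{p}$-scale steps of the induction, where the conjecture does the real work, remain untouched. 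Your proposed fourth-moment refinement does help: expanding $\sum_{u,v}V_{u,v}^4=\sum_{x_1,\dots,x_4\in S}\bigl(\sum_u\chi(\prod_{i}(x_i-u))\bigr)^2$, the Weil bound $\le 3\sqrt{p}$ applies whenever $\prod_i(X-x_i)$ is squarefree, and the squareful tuples number $O(k^2)$, giving $\sum_{i<j}T_{ij}^4\lesssim k^2p^2+k^4p$ and hence a contradiction once $k\gtrsim\sqrt{p}/\alpha^2$. That is precisely the square-root scale, but still orders of magnitude above the constant base $m_\alpha$, so even this strengthening cannot serve as the induction engine the proposition requires. Your diagnosis --- that closing the remaining range needs non-trivial character sum estimates over arbitrary $k$-subsets, beyond what Weil (complete sums) or Burgess (intervals) provide --- is accurate, and is precisely why the statement is, and remains, a conjecture.
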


% To prove this conjecture, we prove the following Theorem. 
% 
% \begin{Theor} Suppose that $\Gamma \subseteq \mathbb{Z}_p$ is a set with $m$ elements. Then, $$\Big| \sum_{x \in \Gamma} \chi (x) \chi (x+a) \Big| \ll m^{0.9} $$
% 
% \end{Theor}
%  
% \begin{proof} Consider the equation \begin{equation} \label{x2y2} x^2-y^2=ay \end{equation} in $\mathbb{Z}_p$. We will count the number of solutions for this equation, with $(x,y) \in \mathbb{Z}_p \times \Gamma$ in two different ways. 
% 
% First, note that given $y \in \Gamma$, \eqref{x2y2} has two solutions for $x$ if $y^2+ay$ is a quadratic residue, i.e., $\chi(y^2+ay)=\chi(y) \chi(y+a)=1$. Moreover if $y^2+ay$ is a quadratic non-residue, i.e., $\chi(y) \chi(y+a)=-1$, then \eqref{x2y2} has no solutions. Therefore, the number of solutions of \eqref{x2y2} is given by $$\sum_{y \in \Gamma} (1+\chi(y) \chi(y+a))=m+\sum_{y \in \Gamma} \chi(y) \chi(y+a)$$
% 
% \end{proof} 
% 

 Note that this conjecture is not just based on what is \textit{needed} to break the square-root barrier, but also based on a similar result already known in Number Theory. We briefly mention this result as stated in \cite{turyn}. 
 
 Let $G$ be a finite (additive) group, and let $D$ be a subset of $G$ (called a \textit{difference set}) with $k$ elements such that every non-zero element of $G$ can be uniquely written as $d_1-d_2$. Then, $$|\sum_{d \in D} \chi(d)|= \sqrt{k-1} $$
 
 \noindent Hence, for any $0< \alpha<1$, we have $$ \frac{|\sum_{d \in D} \chi(d)|}{|D|}= \frac{\sqrt{k-1}}{k}<\alpha$$ provided that $k$ is sufficiently large. We also verify this conjecture numerically with few examples.

 Set $\alpha :=0.8 $, and $m_{\alpha} :=5$. The first prime satisfying $p \geq m_{\alpha}$ is $p=5$ in which case we should start with a set with 5 elements. We only have one choice, and that is $A=\mathbb{Z}_5$ (or any permutation of it). Set $$r_1=0, \ \ r_2=1, \ \ r_3=2, \ \ r_4=3, \ \ r_5=4$$ Then, $$ \Big( \frac{r_1-r_3}{p}\Big) \Big( \frac{r_2-r_3}{p} \Big)=-1,  \Big( \frac{r_1-r_4}{p}\Big) \Big( \frac{r_2-r_4}{p} \Big)=1,  \Big( \frac{r_1-r_5}{p}\Big) \Big( \frac{r_2-r_5}{p} \Big)=-1 $$ Thus, $$\frac{|\sum_{i=3}^5 \Big(\frac{r_1-r_i}{p} \Big) \Big( \frac{r_2-r_i}{p} \Big)|}{3}=1/3<\alpha $$

%  1/3 (more than $\alpha$ and less than $1-\alpha$) of numbers appeared in \eqref{express} are 1's as desired. 

\noindent We also verify for $p=19$. As we know, we should start with a set with at least 5 elements, say, we start with a set with 12 elements. We choose a random support set $T$ with 12 elements, say $T=\{8, 15, 5, 13, 10, 2, 17, 4, 11 18, 16, 19 \}$ (i.e., $r_1=8, r_2=15,...,r_{12}=19$).  Then, among 10 elements of the sequence $ \{ \Big( \frac{r_1-r_3}{p} \Big) \cdot \Big( \frac{r_2-r_3}{p}  \Big) ,...., \Big( \frac{r_1-r'_{12}}{p} \Big) \cdot \Big( \frac{r_2-r_{12}}{p} \Big) \} $, we have three -1's and seven 1's. Hence,

  $$\frac{|\sum_{i=3}^{12} \Big(\frac{r_1-r_i}{p} \Big) \Big( \frac{r_2-r_i}{p} \Big)|}{10}=0.4<\alpha$$

% the first fraction mentioned in \eqref{express} is 7/10, which is between $\alpha=0.2$ and $1-\alpha=0.8$ as desired. In the next step, we consider the sequence $ \{ \Big( \frac{r_3-r_5}{p} \Big) \cdot \Big( \frac{r_4-r_5}{p}  \Big) ,...., \Big( \frac{r_3-r_{12}}{p} \Big) \cdot \Big( \frac{r_4-r_{12}}{p} \Big) \} $. Among 8 elements of this sequence, there are five -1's and three 1's, and hence the second ratio mentioned in \eqref{express} 3/8. Continuing this process, we observe the next fractions are 3/6, and 2/4 as desired. 

In the next experiment, we again work with $p=19$, but we try to consider the worst case (that could potentially fail the conjecture). Such case would occur if we choose a set $\{r_1,...,r_k\}$ such that $\Big( \frac{r_i-r_j}{p} \Big)=1$ whenever $i<j$. In particular, we can choose $T:= \{1,2,18, 16, 15, 14, 8, 7, 6, 4\}$. Then, all 10 elements of the sequence $ \{ \Big( \frac{r_1-r_3}{p} \Big) \cdot \Big( \frac{r_2-r_3}{p}  \Big) ,...., \Big( \frac{r_1-r_{12}}{p} \Big) \cdot \Big( \frac{r_2-r_{12}}{p} \Big) \} $ are 1's which is opposite to what we need. However, we can simply choose $r'_1:=r_7=8$, and $r'_2=r_8=7$, $\{r'_3,...,r'_{10} \} =T \setminus \{7,8\}$ (the order is irrelevant). Then, we would have

 $$\frac{|\sum_{i=3}^{12} \Big(\frac{r_1-r_i}{p} \Big) \Big( \frac{r_2-r_i}{p} \Big)|}{10}=\frac{2}{10}<\alpha$$
 
 \noindent Therefore, in all these experiments the conjecture was verified for $\alpha=0.8$ (in these cases, we could even choose a smaller $\alpha$, e.g., $\alpha=0.5$).

%Then $$ \Big( \frac{r_1-r_3}{p}\Big) \Big( \frac{r_2-r_3}{p} \Big)=1,...., \Big( \frac{r_1-r_7}{p}\Big) \Big( \frac{r_2-r_7}{p} \Big)=1 $$
%
%which is a sequence with all 1's (opposite to what we need). However, even in this situation we can find a permutation that changes the situation to our favour. Namely let $\pi :=( 3 \ 1 \ 5 \ 2 \ 7 \ 4  \ 6)$. (i.e., $r'_1=r_2=18, r'_2=r_4=13,..., r'_7=r_5=12$). Then, we would have $$\begin{aligned} & \Big( \frac{r'_1-r'_3}{p}\Big) \Big( \frac{r'_2-r'_3}{p} \Big)= \Big( \frac{r'_1-r_4}{p}\Big) \Big( \frac{r'_2-r'_4}{p} \Big)=\Big( \frac{r'_1-r'_5}{p}\Big) \Big( \frac{r'_2-r'_5}{p} \Big)=-1, \\   & \Big( \frac{r'_1-r'_6}{p}\Big) \Big( \frac{r'_2-r'_6}{p} \Big)=\Big( \frac{r'_1-r'_7}{p}\Big) \Big( \frac{r'_2-r'_7}{p} \Big)=1 \end{aligned} $$  Hence, after applying the permutation, 2/5 (more than $\alpha$, less than $1-\alpha$) of numbers in the sequence appeared in \eqref{express} become 1, as desired. 
%

Now, based on this conjecture, we prove that the square-root barrier can be broken for the construction given in Definition \ref{tilde}. In the following, we provide a proof using induction on $k$. For the induction base, we need to use a value for the power $\beta$ in $\delta_k \leq \frac{k^{\beta}}{\sqrt{p}}$. Since numerical experiments (see Figure \ref{aboutj}) suggests $\beta<0.7$, we use $\beta=0.7$ as it seems that this is the value that works for any $k$-value. 

\begin{prop}\label{squarerootb} (breaking the square-root barrier).  Suppose Conjecture \ref{numth} holds with $\alpha, \nu, m_{\alpha}$ as defined in the statement of this conjecture.  Let  $\beta=0.7$, and let $c_{\alpha}$ be a fixed integer such that  the  following inequality holds: $$12c_{\alpha}^{1+\beta} < (1-\alpha) c_{\alpha}^2-2c_{\alpha},$$

%\noindent  where  $f(\alpha):= \Big(1- \frac{\min\{\alpha,1-\alpha\}}{2} \Big)- \max\{\alpha,1-\alpha\}$ is a positive constant.   

  \noindent Also, let $b_{\alpha}=\max\{m_{\alpha}+2,c_{\alpha}+2\} $, and suppose that  $p \geq b_{\alpha}^2$. Then, for the construction given in Definition \ref{tilde}, and for $k \leq \min\{\frac{p^{\nu}}{2},\frac{p^{\frac{1}{2\beta}}}{2}\}$, we have $$\delta_k < 1/\sqrt{2}$$  Hence,  the square-root barrier would be broken for this construction. 

\end{prop}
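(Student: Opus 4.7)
The plan is to proceed by strong induction on $k$, carrying the hypothesis $\delta_k \leq k^\beta/\sqrt{p}$ upward through the entire range $b_\alpha \leq k \leq \min\{p^{\nu}/2,\,p^{1/(2\beta)}/2\}$; once this is in place, the inequality $k \leq p^{1/(2\beta)}/2$ gives $\delta_k \leq 2^{-\beta} < 1/\sqrt{2}$ (using $\beta = 0.7 > 1/2$). The base case $k \leq b_\alpha$ is imported from the numerical evidence of Figure \ref{aboutj}; this empirical fit is the only nonrigorous input besides Conjecture \ref{numth} itself. The constant $c_\alpha$ and the hypothesis $p \geq b_\alpha^2$ are sized precisely to give the induction room to start.

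For the inductive step, fix a support set $\Gamma$ with $|\Gamma| = k+2$. The range of $k$ and $p \geq b_\alpha^2$ ensure $m_\alpha \leq k+2 \leq p^\nu$, so Conjecture \ref{numth} applied to $\Gamma$ supplies a preferred pair of indices; I would relabel these as $r_1, r_2$ and partition $\tilde{\Phi}_\Gamma^* \tilde{\Phi}_\Gamma$ into the $2\times 2$ block at $r_1, r_2$ and a $k \times k$ block $Q$ at the remaining indices. Then apply Lemma \ref{newboundlem} with $\eta_k = 1 + k^\beta/\sqrt{p}$ coming from the induction hypothesis (and, symmetrically, $\eta_1 = 1 - k^\beta/\sqrt{p}$ for the minimum eigenvalue).

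The decisive quantity is $\gamma$ from Lemma \ref{aboutalpha}. A direct expansion using the Paley inner product $\langle \phi_n, \phi_{n'}\rangle = \chi(n-n')\,i/\sqrt{p}$ collapses it to
\[
\gamma \;=\; \frac{k^2 - \bigl(\sum_{\ell=3}^{k+2}\chi(r_1-r_\ell)\chi(r_2-r_\ell)\bigr)^{2}}{p^{2}},
\]
and the conjectural character-sum bound $|\sum \chi\chi| \leq \alpha k$ on the chosen pair yields $\gamma \geq (1-\alpha^{2})k^{2}/p^{2}$. This is of order $k^{2}/p^{2}$ and is strictly stronger than what Cauchy--Schwarz or the plain Generalized Dembo analysis of Theorem \ref{mainthgeneral} deliver. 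Writing $p(x) = (\eta_k - x)^{k-2}\,\mathrm{bracket}(x)$ with $\mathrm{bracket}(\lambda) = (\eta_k - \lambda)\,q(\lambda) + \gamma$ (the $2\,\mathrm{Re}(b\textbf{d}\textbf{c}^{*})$ term vanishes because $b$ is purely imaginary and $\textbf{d}\textbf{c}^{*}$ is real), and setting $C = (k+2)^{\beta}$, $D = k^{\beta}$, $v = C-D$, the bracket at $\lambda_0 = 1 + C/\sqrt{p}$ simplifies to $p^{2}\cdot\mathrm{bracket}(\lambda_0) = (vC - k)^{2} - v^{2} - (\sum \chi\chi)^{2}$. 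The numerical hypothesis $12\,c_\alpha^{1+\beta} < (1-\alpha)c_\alpha^{2} - 2c_\alpha$ is exactly the compact closed-form condition guaranteeing positivity of the bracket at the threshold for every $k \geq c_\alpha$, obtained by comparing the dominant $(1-\alpha^{2})k^{2}$ contribution against the $O(k^{2\beta})$ corrections. The symmetric argument at $\lambda_0 = 1 - C/\sqrt{p}$ controls the minimum eigenvalue and closes the induction.

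The main obstacle is promoting pointwise positivity of the bracket at $\lambda_0$ into a global statement about the eigenvalues of the auxiliary matrix $R_1$: because the bracket is a quartic in $\lambda$, one must rule out stray roots in the tail $(\lambda_0,\infty)$ to invoke Lemma \ref{newboundlem} and conclude $\lambda_{\max}(R_1) \leq \lambda_0$. The conjectural lower bound $\gamma \geq (1-\alpha^{2})k^{2}/p^{2}$ is exactly the ingredient that is unavailable in Theorem \ref{mainthgeneral}; converting the compact starting condition on $c_\alpha$ into rigorous global control of the bracket throughout the induction is where the technical work lies, and it is the reason the improvement is only conditional on Conjecture \ref{numth}.
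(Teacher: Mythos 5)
Your proposal follows the paper's proof closely: the same double-step induction on $k$, a numerically verified base near $b_\alpha$, Conjecture \ref{numth} to single out a column pair with small character correlation, Lemma \ref{newboundlem} with the characteristic-polynomial formula of Lemma \ref{aboutalpha}, and positivity of the quartic bracket at $\lambda_0 = 1 \pm C/\sqrt{p}$ with $C=(k+2)^\beta$. One genuine improvement in your sketch: the paper bounds $\gamma$ only as $\gamma \geq \big((1-\alpha)k^2 - 2k\big)/p^2$ by estimating the two sums in Lemma \ref{aboutalpha} separately, whereas you observe the clean telescoping identity $\gamma = (k^2 - S^2)/p^2$ with $S = \sum_\ell \chi(r_1 - r_\ell)\chi(r_2 - r_\ell)$, which gives $\gamma \geq (1-\alpha^2)k^2/p^2$ directly; since $1-\alpha^2 > 1-\alpha$ the stated $c_\alpha$ hypothesis remains sufficient. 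You also apply the conjecture once per inductive step, directly to the $(k+2)$-block, invoking the induction hypothesis on the remaining $k\times k$ sub-Gramian, whereas the paper iterates the conjecture top-down to pre-order the whole index set $\Gamma$ and then runs a bottom-up chain of Dembo steps inside a single argument; these are equivalent bookkeepings and yours is cleaner. The monotonicity issue you flag at the end is precisely where the paper spends its remaining effort: it checks that the bracket, restricted to $\lambda = 1+C/\sqrt{p}$, has positive $C$-derivative for $C > (k+2)^\beta$, so positivity at the threshold propagates and the extreme root of the auxiliary block matrix is controlled. One small correction: the range $k \leq b_\alpha$ is handled rigorously in the paper via Theorem \ref{improvedskew} (using $k \leq \sqrt{p}$), not via the numerical fit; only the induction base $\delta_k \leq k^\beta/\sqrt{p}$ at $k = b_\alpha-1,\,b_\alpha-2$ is numerical.
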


\begin{rk} Here, we make the above statement more concrete. Set $\alpha=0.8, m_{\alpha}=5$, and $\nu=0.8$. Then, $c_{\alpha}$ is the smallest integer satisfying $$12x^{1.7} \leq 0.2x^2-2x$$ Numerically we check that we can set  $c_{\alpha}=899,998$, and this gives  $b_{\alpha}=900,000$. So the proposition above then reduces to: 

``Suppose Conjecture \ref{numth} holds with the values mentioned above, and let $p \equiv 3$ mod 4 be a prime number satisfying $p \geq 81 \times 10^{10}$, and consider the construction given in Definition \ref{tilde}. Then, the RIP constants of such construction satisfy $\delta_{2k} < 1/\sqrt{2}$ for any $k<\frac{p^{5/7}}{2}$".
\end{rk}

%\noindent For example, for $\alpha :=0.8$ and $m_{\alpha} :=5$ considered above, $c_{\alpha}$ is the smallest integer satisfying $$12x^{1.7} \leq 0.2x^2-2x$$ 

%For example, if for $\alpha=0.2$, the integer  $m_{\alpha}=5$ satisfies the Conjecture. Moreover, for these values of $\alpha, \beta, m_{\alpha}$, we have $f(\alpha)=0.1$, and the value of $c_{\alpha}$ as mentioned in Theorem above is a number between 500,000 and 1,000,000. (for $x=500,000$ we have $12x^{1+\beta} < f(\alpha) x^2$ while for $x=1,000,000$ we have $12x^{1+\beta} > f(\alpha)x^2$). 

 \begin{proof}[Proof of Proposition \ref{squarerootb}] First, let $k \leq b_{\alpha}$. Then, $k \leq \sqrt{p}$, and hence, by Theorem \ref{improvedskew}, the RIP constant of the measurement matrix $\tilde{\Phi}$ satisfies $$\delta_k \leq \frac{2}{\pi} \frac{k}{\sqrt{p}} \leq \frac{2}{\pi}<\frac{1}{\sqrt{2}} $$ as desired. Next, let $k=k_0 \geq b_{\alpha}$ be an arbitrary integer (with $k_0 \leq p^{\nu}$), and we prove that \begin{equation} \label{deltakbet} \delta_{k} \leq \frac{k^{\beta}}{\sqrt{p}} \end{equation} holds for $k=k_0$ using induction. To do that, first verify \eqref{deltakbet} for $k=b_{\alpha}-1, b_{\alpha}-2$ numerically (induction base). Now, since $k_0 \geq m_{\alpha}$, by Conjecture \ref{numth}, there exist indices $1<i<j\leq k_0$ such that if we set $r'_1=r_i, r'_2=r_j
 $, then  \begin{equation} \label{inalpha1} \frac{\Big|\sum_{\ell=3}^{k_0} \Big( \frac{r'_1-r'_{\ell}}{p} \Big) \Big( \frac{r'_2-r'_{\ell}}{p} \Big)\Big|}{k_0-2}< \alpha \end{equation} 
 
 \noindent where $r'_3,...,r'_{k_0}$ are the elements of the set $\{r_1,...,r_{k_0}\} \setminus \{r'_1,r'_2\}$ (the order is irrelevant).  In the next step, we consider $I_3 :=\{r'_3,r'_4,...,r'_{k_0}\}$, and if $k_0-2=|I_3| \geq m_{\alpha}$, then  we apply Conjecture \ref{numth} again, and after possibly a permutation, we can assume that \begin{equation} \label{inalpha2} \frac{\Big| \sum_{\ell=5}^{k_0} \Big( \frac{r'_3-r'_{\ell}}{p} \Big) \Big( \frac{r'_4-r'_{\ell}}{p} \Big) \Big| }{k_0-4}< \alpha. \end{equation}

  \noindent Continuing this process, in the last step we reach  
 \begin{equation} \label{inalpha3} \frac{\Big| \sum_{\ell=k_0-b_{\alpha}+2}^{k_0} \Big( \frac{r'_{k_0-b_{\alpha}}-r'_{\ell}}{p} \Big) \Big( \frac{r'_{k_0-b_{\alpha}+1}-r'_{\ell}}{p} \Big) \Big| }{b_{\alpha}-1}< \alpha \end{equation}  
 
 \noindent if $k_0 \equiv b_{\alpha}+1$ mod 2, and \begin{equation} \label{inalpha4} \frac{\Big| \sum_{\ell=k_0-b_{\alpha}+3}^{k_0} \Big( \frac{r'_{k_0-b_{\alpha}+1}-r'_{\ell}}{p} \Big) \Big( \frac{r'_{k_0-b_{\alpha}+2}-r'_{\ell}}{p} \Big) \Big| }{b_{\alpha}-2}< \alpha \end{equation} 
 
 \noindent if $k_0 \equiv b_{\alpha} $ mod 2. Now, let $\lambda_{k_0}^{\max}$ and $\lambda_{k_0}^{\min}$ denote the maximum and minimum eigenvalues of the Gramian matrices of order $k_0$ with the largest maximum eigenvalue and the smallest minimum eigenvalues respectively. We prove the theorem using two main steps.

% First, we need to verify \textit{numerically} that $$\delta_k \leq \frac{k^{\beta}}{\sqrt{p}}$$ for $k=b_{\alpha}-2$, and $k=b_{\alpha}-1$. 

 \textbf{Step 1: Proving $\lambda_{k_0}^{\max} \leq 1+\frac{k_0^{\beta}}{\sqrt{p}}$}. As mentioned above, our induction base consists of verifying \eqref{deltakbet} for $k=b_{\alpha}-1, b_{\alpha}-2$ numerically. Next, we consider two cases. 
 
  Case (I): $k_0 \equiv b_{\alpha}$ mod 2. Our goal is to estimate the eigenvalues of a matrix of the form $G=G_{k_0}^{\max}=\Phi_T^* \Phi_T$, with $T=\{r_1,r_2,...,r_{k_0}\}$. After possibly a proper permutation, we can assume that $T=\{r'_1,...,r'_{k_0}\}$ is such that all of \eqref{inalpha1}, \eqref{inalpha2},...,\eqref{inalpha4}  hold.  Now, let $T_1=\{r_{k_0-b_{\alpha}+3},...,r_{k_0}\}$, $k_1 :=b_{\alpha}-2$, $G_1=\Phi_{T_1}^* \Phi_{T_1}$ ($G_1$ is obtained by considering the last $k_1=b_{\alpha}-2$ rows and columns of $G$). We start by estimating eigenvalues of this matrix. By induction hypothesis,  an upper bound for the largest eigenvalue of $G_1$ is given by $$\eta_{k_1}=1+\frac{k_1^{\beta}}{\sqrt{p}}$$ 
 
 \noindent Now, we show that $$\eta_{k_2}=1+\frac{(k_1+2)^{\beta}}{\sqrt{p}}$$ (with $k_2=k_1+2$) is an upper bound for the maximum eigenvalue of the Gramian matrix $G_2$ obtained by considering the last $k_1+2$ rows and columns of $G$.  In order to do that, we write $G_2$ in the form $$ G_2=\begin{bmatrix} 1 & b & \textbf{c} \\ b^*  & 1 &  \textbf{d} \\ \textbf{c}^* & \textbf{d}^* & G_1  \end{bmatrix} $$ with $\textbf{c}=(c_1,...,c_k)$, $\textbf{d}=(d_1,...,d_k)$, $c_i=\langle \tilde{\Phi}_{r_{k_0-k_1-1}}, \tilde{\Phi}_{r_{k_0-k_1+i}} \rangle $, and $d_i=\langle \tilde{\Phi}_{r_{k_0-k_1}}, \tilde{\Phi}_{r_{k_0-k_1+i}} \rangle $ (with $1 \leq i \leq k_1$). Note that  each of the entries $b,c_i$'s and $d_i$'s are $\pm \frac{i}{\sqrt{p}}$. (similar to what we did in \eqref{gkmax}). We also define $B_2$ via \begin{equation} \label{b2def} B_2=\begin{bmatrix} 1 & b & \textbf{c} \\ b^*  & 1 &  \textbf{d} \\ \textbf{c}^* & \textbf{d}^* & \eta_{k_1}  \end{bmatrix} \end{equation} where $\eta_{k_1}=1+\frac{D}{\sqrt{p}}$, with $D=k_1^{\beta}$, and we let  $\lambda=1+C/\sqrt{p}$ with  $C>(k_1+2)^{\beta}$. We will show that $p(\lambda) \ne 0$, where $p(x)$ is the characteristic polynomial of $B_2$.  First, we write the expression for $p(x)$ as given in \eqref{generalp}.

  $$ \begin{aligned} p(x) & = (\eta_k-x)^{k-2} \Big( (1-x))^2 (\eta_k-x)^2 -(1-x)(\eta_k-x) (\textbf{d}\textbf{d}^*+\textbf{c}\textbf{c}^*)- \\ & (\eta_k-x)^2 bb^* +2(\eta_k-x)Re(b\textbf{d}\textbf{c}^*) +\gamma \Big)  \end{aligned} $$
 
% $$ \begin{aligned} p(x) & = (1-x))^2 (\eta_k-x)^2 -(1-x)(\eta_k-x) (\textbf{d}\textbf{d}^*+\textbf{c}\textbf{c}^*)- \\ & (\eta_k-x)^2 bb^* +2(\eta_k-x)Re(b\textbf{d}\textbf{c}^*) +\gamma >0  \end{aligned} $$
 
\noindent  Using $Re(b \textbf{d} \textbf{c}^*)=0$, $\textbf{c}^*\textbf{c}=\textbf{d}^*\textbf{d}=\frac{k_1}{p}$, and $\eta_k-\lambda \ne 0$, we observe that to show $p(\lambda) \ne 0$, it  is enough to show \begin{equation} \label{q1ineq} q(C) := C^2(D-C)^2-(-C)(D-C)(2k_1)-(D-C)^2+ \gamma>0 \end{equation} i.e., $$q(C)=C^2(C-D)^2-C(C-D)(2k_1)-(C-D)^2 +\gamma >0$$ In order to show this inequality, we show that \begin{equation} \label{forgamma} C(C-D)(2k_1)+(C-D)^2< \gamma \end{equation} where we used the fact that $\gamma \geq 0$. We call the left hand side and right hand side of the inequality above as \textit{LHS} and \textit{RHS} respectively. Next, $$LHS = (C-D) \Big( 2k_1C+(C-D) \Big) =(C-D) \Big( (2k_1+1)C-D \Big) $$ Our goal is to prove \eqref{forgamma} for any $C>(k_1+2)^{\beta}$, but we start by considering $C=(k_1+2)^{\beta}$. Then, $$C-D=(k_1+2)^{\beta}-k_1^{\beta}<2$$ where we used the fact that if we set $f(x)=a^x-(a-2)^x$, then $f(1)=2$, and $f(x)$ is increasing for $0 \leq x \leq 1$. Hence, $$LHS \leq 2 \Big( (2k_1+1)(k_1+2)^{\beta}-k_1^{\beta} \Big) < 2 \Big( (3k_1)(2k_1)^{\beta} \Big)<12 k_1^{1+\beta}  $$ where we used the fact that $2k_1+1<3k_1$, and $k_1+2<2k_1$. 

\noindent On the other hand, $$RHS=\gamma=\sum_{i=1}^{k_1} c_ic_i^* \textbf{d}_i \textbf{d}_i^*-    \sum_{i=1}^{k_1} c_i d_i^* \textbf{d}_i \textbf{c}_i^*=k_1(k_1-1)- \sum_{i=1}^{k_1} c_i d_i^* \textbf{d}_i \textbf{c}_i^*    $$  In above, $\textbf{c}_i$, $\textbf{d}_i$ are vectors $\textbf{c}$ and $\textbf{d}$ excluding their $i$th entry (so they are $(k_1-1)$-dimensional vectors). Now, we find an upper bound for $\sum_{i=1}^{k_1} c_i d_i^* \textbf{d}_i \textbf{c}_i^*$: 

$$\sum_{i=1}^{k_1} c_i d_i^* \textbf{d}_i \textbf{c}_i^* \leq \sum_{i=1}^{k_1} |\textbf{d}_i^* \textbf{c}_i| \leq k_1 \max_{i} |\textbf{d}_i^* \textbf{c}_i | $$

\noindent On the other hand, for each $1 \leq j \leq k_1$, by \eqref{inalpha4}, we have $$|\textbf{d}_j^* \textbf{c}_j| \leq  \Big| \sum_{i=k_0-k_1+1}^{k_0} \Big( \frac{r'_{k_0-k_1-1}-r'_i}{p} \Big) \Big( \frac{r'_{k_0-k_1}-r'_i}{p} \Big) \Big|+1 \leq \alpha k_1+1 $$ Thus, $$\gamma \geq k_1(k_1-1)- \alpha k_1^2 -k_1=(1-\alpha) \cdot k_1^2-2k_1$$

 \noindent  and hence,  $\gamma \geq 12 k_1^{1+\beta}$. In above, we used the fact that for $k\geq c_{\alpha}$, we have $12k^{1+\beta} \leq (1-\alpha)k^2-2k$. Therefore, $q(C)>0$ for $C=(k_1+2)^{\beta}$. Next, note that the value of $\gamma$ does not depend on $C$, and note that $C-D$ is increasing as a function of $C$. Hence, if we show that $g(C) :=C^2(C-D)-2k_1C-(C-D)  $ is an increasing function of $C$, then we can conclude that $q(C)>0$ for $C>(k_1+2)^{\beta}$. Now, we have $$\frac{d}{dC} g(C)=3C^2-2k_1-2CD-1=3C^2-2C \cdot k_1^{\beta} -2k_1-1 >0$$
 
 \noindent for $C\geq (k_1+2)^{\beta}$. To justify the last inequality, we define $h(C):=3C^2-2C \cdot k_1^{\beta}-2k_1-1$, and we verify that $h\Big( (k_1+2)^{\beta} \Big) >0$, and $\frac{d}{dC} h(C)>0$ for $C>(k_1+2)^{\beta}$ :  $$ \begin{aligned} & h \Big( (k_1+2)^{\beta} \Big) = 3 (k_1+2)^{2 \beta}-2 (k_1^2+2k_1)^{\beta}-2k_1-1 \\ & =3(k_1^2+4k_1+4)^{\beta}-2(k_1^2+2k_1)^{\beta}-2k_1-1\\ & \geq 3(k_1^2+2k_1)^{\beta}-2(k_1^2+2k_1)^{\beta}-2k_1-1=(k_1^2+2k_1)^{\beta}-2k_1-1>0, \end{aligned} $$ for $k_1 \geq 4$. We also have  $$\frac{d}{dC} (3C^2-2C k_1^{\beta}-2k_1-1)=6C-2k_1^{\beta}>0 $$ for $C>k_1^{\beta}/3$. Therefore, if we set $k_2:=k_1+2$,we have shown that $1+\frac{(k_2)^{\beta}}{\sqrt{p}}$ is an upper bound for the maximum eigenvalue of $G_2=\Phi_{T_2} \Phi_{T_2}$, with $T_2$ being the set containing the last $k_2$ entries of $T$, i.e., $T_2=T_1\cup \{r_{k-b_{\alpha}+2},r_{k-b_{\alpha}+1}\}$. After repeating a similar reasoning mentioned above, we can conclude that $1+\frac{k_3^{\beta}}{\sqrt{p}}$ is an upper bound for $G_3=\Phi_{T_3}^* \Phi_{T_3} $, with $k_3 :=k_2+2$, and $T_3$ being the set containing the last $k_3$ elements of $T$.  By continuing this process, we conclude that in the last step, $1+\frac{k_{\ell}^{\beta}}{\sqrt{p}}$ is an upper bound for the maximum eigenvalue of $G_{\ell}=G=\Phi_{T_{\ell}} \Phi_{T_{\ell}}$, where $k_{\ell}=k_0$, and $T_{\ell}=T$.

%In the next step, we put $k_2: =k_1+2$, and it can be similarly shown that $1+(k_2+2)^{\beta}$ is an upper bound for the largest eigenvalue of $G_2$, obtained by considering the last $k_2$ rows and columns of $G$. Continuing this process, in the last step, we show that $1+(k_{\ell}+2)^{\beta}$ is the bound for the largest eigenvalue of $G_{\ell}=G$. 

Case (II).  $k_0 \equiv b_{\alpha}+1$ mod 2. In this case,  to find an upper bound for the maximum eigenvalue of $G=G_{k_0}^{\max}=\Phi_T^* \Phi_T$ (again with $T=\{r_1,r_2,...,r_{k_0}\}$), we begin with estimating the eigenvalues of $G_1 :=\Phi_{T_1}^* \Phi_{T_1}$, with $T_1$ being the set containing the last $k_1:=b_{\alpha}-1$ elements of $T$.  By induction hypothesis, the upper bound for the largest eigenvalue of $G_1$ is given by $$1+\frac{k_1^{\beta}}{\sqrt{p}} $$ 

\noindent Next, it can be shown (similar to above) that $1+\frac{k_2^{\beta}}{\sqrt{p}}$ is an upper bound for the largest eigenvalue of $G_{2}=\Phi_{T_2}^* \Phi_{T_2}$, with $k_2:=k_1+2$, and $T_2$ being the set containing the last $k_2$ entries of $T$. Continuing this process, we conclude that the $1+\frac{k_{\ell}^{\beta}}{\sqrt{p}}$ is an upper bound for the largest eigenvalue of $G=G_{\ell}=\Phi_{T_{\ell}}^* \Phi_{T_{\ell}}$, with $k_{\ell}=k_0$ and $T_{\ell}=T$.  

 \textbf{Step 2: Proving $\lambda_{k_0}^{\min} \geq 1-\frac{k_0^{\beta}}{\sqrt{p}}$}. The proof of this step is similar to the one given for Step 1 (and consists of two cases). In each case, we start by concluding from induction hypothesis that $\eta_{1,k_1} :=1-\frac{D}{\sqrt{p}}$, with $D=k_1^{\beta}$ is a lower bound for the minimum eigenvalue of $G_1=\Phi_{T_1}^* \Phi_{T_1}$. Then, we let $\lambda=1-\frac{C}{\sqrt{p}}$ with $C>D+2$, and we show that $p(\lambda) \ne 0$, where $p(x)$ is the characteristic polynomial of the matrix $B_2'$ (obtained from the matrix $B_2$ as given in \eqref{b2def}, but $\eta_{k_1}$ replaced by $\eta_{1,k_1}$). To do so, it is enough to show \eqref{q1ineq} holds, which we already know its validity as proved in Step 1. Therefore, $\eta_{1,k_2}=1-\frac{k_2^{\beta}}{\sqrt{p}}$ (with $k_2=k_1+2$) is a lower bound for the minimum eigenvalue of $B_2'$ (and hence the minimum eigenvalue of $G_2$). Continuing this process, after $\ell$ steps, we conclude that $\eta_{1,k_{\ell}}=1-\frac{k_{\ell}^{\beta}}{\sqrt{p}}$ is a lower bound for the minimum eigenvalue of $G_{\ell}=G=\Phi_{T_{\ell}}^*\Phi_{T_{\ell}}$, with $k_{\ell}=k_0$, and $T_{\ell}=T$. 
 
 Gathering the results proved in Steps 1 and 2, we conclude that the RIP constants of $\tilde{\Phi}$ satisfy $$\delta_k =\max\{\lambda_{k}^{\max}-1,1-\lambda_{k}^{\min} \} \leq \frac{k^{\beta}}{\sqrt{p}}$$ for $b_{\alpha} \leq k \leq p^{\nu}$.  Therefore, $\delta_k<1/\sqrt{2}$ for $k \leq \min\{\frac{p^{\nu}}{2}, \frac{p^{\frac{1}{2\beta}}}{2}\}$, as desired.

\end{proof}

\section{Concluding remarks}

%Compressed sensing is a novel method for highly efficient data acquisition based on the paradigm that the information content of data is typically much smaller than the size seems to indicate. The research area of 

In CS, the performance of a measurement matrix is normally judged by the estimates for its RIP constants, since calculating the exact values of RIP constants is an NP-hard problem-- at least in a vast regime for number of measurements $m$ (vs. the sparsity level $k$). A common bound for RIP constants of a matrix, namely, $\delta_k \leq (k-1) \mu$ is derived by applying Gershgorin circle theorem on Gramian matrices. However, one should note that Gershgorin circle theorem estimates the eigenvalues of a matrix uniformly, while the RIP constants depend only on the maximum and minimum eigenvalues of the Gramian matrices. Furthermore, Gershgorin circle theorem can be applied to \textit{any} square matrix, and hence it does not use the fact that the Gramian matrices are \textit{positive semidefinite}. In this paper, we deployed the so-called Dembo bounds, which estimate the maximum and minimum eigenvalues of a positive semidefinite matrix, to improve the classical bound $\delta_k \leq (k-1) \mu$ by an additive constant for the so-called Paley tight frames.  However, we showed that this method has a great potential in general. In fact, we showed that if a particular conjecture regarding the distribution of quadratic residues holds, then we can generalize  Dembo bounds to break the square-root barrier via $k=\mathcal{O}(m^{5/7})$. We substantiated this conjecture by numerical experiments, and we also theoretically discussed it. Furthermore, we used the notion of skew-symmetric adjacency matrices and a recent (2018) result regarding a bound on the spectral radius of an oriented graph to derive a multiplicative constant improvement on the classical bound $\delta_k \leq (k-1) \mu$ for the Paley tight frames. In particular, we showed that the maximum sparsity level satisfies $2k < \frac{\pi}{2} \cdot \frac{1}{\mu \sqrt{2}}$ (opposed to $2k < \frac{1}{\mu \sqrt{2}}+1$).

\bibliographystyle{plain}
\bibliography{beyond_Gersh_01}

% Any other unusual prefactory material should come here before the
% main body.

% Now regular page numbering begins.
%\mainmatter

%\begin{thebibliography}{99} % Bibliography - this is intentionally simple in this template
%
%\bibitem[Figueredo and Wolf, 2009]{Figueredo:2009dg}
%Figueredo, A.~J. and Wolf, P. S.~A. (2009).
%\newblock Assortative pairing and life history strategy - a cross-cultural
%  study.
%\newblock {\em Human Nature}, 20:317--330.
% 
%\end{thebibliography}

%----------------------------------------------------------------------------------------

\end{document}